\newcommand{\bra}[1]{\langle {#1} \vert}
\newcommand{\ket}[1]{\vert {#1} \rangle}
\newcommand{\kket}[1]{\vert {#1} \rangle\!\rangle}
\newcommand{\proj}[1]{\vert {#1} \rangle\!\langle {#1} \vert}
\newcommand{\ketbra}[2]{\vert {#1} \rangle\!\langle {#2} \vert}
\newcommand{\pproj}[1]{\vert {#1} \rangle\!\rangle\!\langle\!\langle {#1} \vert}
\newcommand{\Tr}[0]{{\mathrm{Tr}}}
\newcommand{\hcal}[0]{{\mathcal{H}}}
\newcommand{\ical}[0]{{\mathcal{I}}}
\newcommand{\ocal}[0]{{\mathcal{O}}}
\newcommand{\Jamiolkowski}[0]{{Jamio\l kowski}}
\newcommand{\doublewidetilde}[1]{{%
  \mathpalette\double@widetilde{#1}%
}}
\newcommand{\double@widetilde}[2]{%
  \sbox\z@{$\m@th#1\widetilde{#2}$}%
  \ht\z@=.9\ht\z@
  \widetilde{\box\z@}%
}
\newcommand{\ttsmap}[1]{\doublewidetilde{\mbox{$\mathcal{#1}$}}}
\newcommand{\tmap}[1]{\widetilde{\mbox{$\mathcal{#1}$}}}
\newcommand{\tmapf}[1]{\widetilde{\mbox{${#1}$}}}
\newcommand{\tmapid}[0]{\widetilde{\mbox{$id$}}}
\newcommand{\stgs}[0]{{t}}
\newcommand{\nio}{N^{\ical_0 \ical \ocal}}
\newcommand{\psym}[0]{{\Pi_{sym}^{\ical \ocal}}}
\newcommand{\pnsym}[0]{{\Pi_{sym}^{\perp}}}
\newcommand{\psup}[0]{{\Pi_{sup}}}
\newtheorem{thm}{Theorem}
\newtheorem{lem}{Lemma}
\theoremstyle{definition}
\newtheorem{rem}{Remark}
\begin{document}

\title{Success-or-Draw:~A~Strategy~Allowing~Repeat-Until-Success~in~Quantum~Computation}

\author{Qingxiuxiong Dong}
 \email{qingxiuxiong.dong@gmail.com}
 \affiliation{Department of Physics, Graduate School of Science, The University of Tokyo, Bunkyo-ku, Tokyo 113-0033, Japan}
\author{Marco T{\'u}lio Quintino}
 \email{marco.quintino@univie.ac.at}
 \affiliation{Department of Physics, Graduate School of Science, The University of Tokyo, Bunkyo-ku, Tokyo 113-0033, Japan}
 \affiliation{Vienna Center for Quantum Science and Technology (VCQ), Faculty of Physics, University of Vienna, Boltzmanngasse 5, A-1090 Vienna, Austria}
 \affiliation{Institute for Quantum Optics and Quantum Information (IQOQI), Austrian Academy of Sciences, Boltzmanngasse 3, A-1090 Vienna, Austria}
\author{Akihito Soeda}
 \email{soeda@phys.s.u-tokyo.ac.jp}
 \affiliation{Department of Physics, Graduate School of Science, The University of Tokyo, Bunkyo-ku, Tokyo 113-0033, Japan}
\author{Mio Murao}
 \email{murao@phys.s.u-tokyo.ac.jp}
 \affiliation{Department of Physics, Graduate School of Science, The University of Tokyo, Bunkyo-ku, Tokyo 113-0033, Japan}
 \affiliation{Trans-scale Quantum Science Institute, The University of Tokyo, Bunkyo-ku, Tokyo 113-0033, Japan}

\begin{abstract}

Repeat-until-success strategy is a standard method to obtain success with a probability which grows exponentially in the number of iterations.
However, since quantum systems are disturbed after a quantum measurement, it is not straightforward how to perform repeat-until-success strategies in certain quantum algorithms.
In this paper, we propose a new structure for probabilistic higher-order transformation named success-or-draw, which allows a repeat-until-success implementation.
For that we provide a universal construction of success-or-draw structure which works for any probabilistic higher-order transformation on unitary operations. 
We then present a semidefinite programming approach to obtain optimal success-or-draw protocols and analyze in detail the problem of inverting a general unitary operation.

\end{abstract}

\date{\today}

\maketitle

{\it Introduction --}
Quantum algorithms are an inevitable element for exploiting the potential of quantum computation~\cite{nielsenchuang,wilde,watrous}.
In many quantum algorithms, a unitary operation characterizing the problem and its related quantum operations, such as its inverse operation, are used as subroutines.
Quantum supermaps describe such relationships between quantum operations,
and are used for analyzing higher-order transformations between quantum operations~\cite{comb1,comb2,comb3}.
In spite of their concrete formalism, many ``useful'' supermaps, 
such as cloning unitary operations~\cite{unitary_cloning}, inverting unitary operations~\cite{unitary_inversion1,unitary_inversion2,refocusing,storage_retrieval,one_shot_entropy}, 
controlling unitary operations~\cite{controllization}, unitary learning~\cite{unitary_learning,storage_retrieval}, 
are not physically implementable in an exact and deterministic manner.
In order to perform such supermaps, two types of relaxation are usually considered: 
the approximate transformation and the probabilistic transformation.
In addition to these relaxations, adding certain resources is also considered, especially by allowing multiple calls of an input quantum operation.
In the quantum circuit implementation, multiple calls are achievable by using the corresponding quantum circuit multiple times.
With the assumption of multiple calls,
the strategy to approximate supermaps has an advantage because it is always possible to perform process tomography~\cite{process_tomography} to obtain a classical description of the input quantum operation,
calculate the output quantum operation of the supermap,
and implement the output quantum operation according to the classical description.
On the other hand, it is not known in general whether we can perform a supermap probabilistically but exactly, 
even if an arbitrary but finite number of calls are allowed.

\begin{figure}
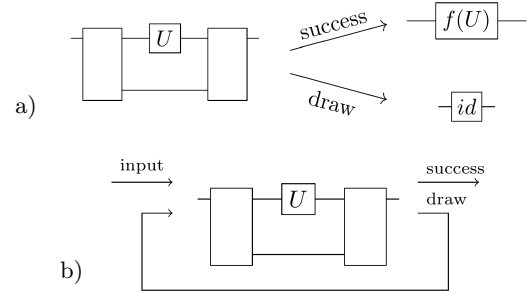

\raisebox{1em}{a)} \includegraphics[width=.8\hsize]{comb_sn_single.png} \\
\raisebox{1em}{b)} \includegraphics[width=.66\hsize]{sn_loop.png}
\caption{Success-or-draw supermaps are the ones when the supermap fails, the output state remains to be the initial state.
Since the initial state is not altered on failure, one can re-iterate this protocol to obtain an exponentially decreasing failure probability.
Fig.~\ref{fig:sn_concept}a represents the action of a success-or-draw supermap: when it succeeds, the target operation $f(U)$ is obtained;
and when it draws, the identity operation $id$ is obtained.
Fig.~\ref{fig:sn_concept}b illustrates a repeat-until-success protocol which is allowed for a success-or-draw protocol.}
\label{fig:sn_concept}
\end{figure}

Quantum process tomography~\cite{process_tomography} allows a universal and approximate implementation of quantum supermaps, 
but the figure of merit, usually the average fidelity $F$, is expected to scale as $1- 1/poly(N)$ given $N$ calls of the input operation.
The probabilistic strategy, on the other hand, can achieve a success probability converges to one exponentially,
if it is possible to perform independent trials such as in a repeat-until-success protocol.
That is, if we can perform a probabilistic supermap with probability $p$ using ``a unit of resources'' such as one quantum operations,
we can perform this probabilistic supermap with probability $1-(1-p)^N$ using $N$ units of resources.
However, the resources required to perform a supermap are not only the input quantum operations,
the input state which the output quantum operation of the supermap is applied on, should also be counted as a resource.
In quantum mechanics, transformations usually disturb quantum states~\cite{wolf,information_disturbance}, 
and the input state of a probabilistic supermap is usually changed regardless of success or failure,
that is, the input state is lost after a trial of a supermap. 
Also, the cloning of a quantum state is forbidden by the no-cloning theorem~\cite{nocloning}.
Thus, it is not possible to simply perform independent trials. 
On the other hand, while allowing multiple copies of an input state may help in certain tasks~\cite{ncopy},
it is difficult to realize in many cases.
For such reason, we consider probabilistic supermaps under the following assumption,
which is also a well-studied scenario in many previous researches~\cite{unitary_cloning,unitary_inversion1,unitary_inversion2,refocusing,storage_retrieval,controllization,unitary_learning,
storage_retrieval,pbt1,pbt2,unitary_conjugate}:
multiple calls of an input quantum operation are allowed, and only a single use of an input state is available.

In this paper, we propose a structure of probabilistic supermap called ``success-or-draw'' structure as Fig.~\ref{fig:sn_concept}a shows.
In a usual probabilistic supermap, the input state is lost when it fails 
because an unknown quantum operation, which is the output quantum operation of the probabilistic supermap on failure, is applied on the input state, such as in the universal programmable quantum processor by port-based teleportation~\cite{pbt1,pbt2} and the probabilistic store and retrieve of unitary operations~\cite{storage_retrieval}.
This fact together with the impossibility to clone the quantum state makes another trial to be not possible.
However, while it is not possible to clone the quantum state, it is not known if it is possible to ``keep'' the quantum state when a probabilistic supermap fails.
Thus, we propose a probabilistic supermap which ``keeps'' the quantum state on failure, 
or we call it a draw as we are able to perform another trial when it happens as Fig.~\ref{fig:sn_concept}b shows.
To summarize, a success-or-draw supermap has the following structure:
when it is success, the target quantum operation is obtained;
when it is draw, the identity operation is obtained;
and the probability of success and draw sum up to one.

Is it always possible to find a success-or-draw supermap for a given task?
In Refs.~\cite{unitary_inversion1,unitary_inversion2}, the probabilistic unitary inversion, 
which is a supermap transforming a unitary operation into its inverse, has been analyzed in the multiple calls scenario,
and it is shown that the success-or-draw structure can be achieved by a construction of the quantum circuit.
In this paper, we show that the success-or-draw structure can be achieved for a larger class of supermaps by using a certain number of copies of an input quantum operation.
Precisely, if there exists a probabilistic supermap transforming a single $d$-dimensional unitary operation into an arbitrary completely positive and trace-preserving (CPTP) map,
then it is possible to construct a success-or-draw supermap with $d$ copies of the input unitary operation.
In particular, if a supermap is completely CP preserving (CCPP), a condition corresponding to complete positivity (CP) of quantum operations,
it is probabilistically implementable~\cite{comb3}.
Even if a supermap is not probabilistically implementable, 
in case that the supermap is linear, its approximated version, e.g., the structural physical approximation~\cite{spa1,spa2}, is probabilistically implementable.
Thus, this result applies to all linear supermaps on unitary operations if an approximation is also considered.

This result indicates that if there exists a probabilistic supermap transforming a unitary operation into a CPTP map,
then the success probability of this supermap can approach one exponentially by allowing multiple calls of the input unitary operation.
Moreover, the corresponding physical realization is given by repetitive trials of a single block of probabilistic supermap as shown in Fig.~\ref{fig:sn_concept}b,
and the cost for building the corresponding quantum circuit does not increase with the number of calls.

{\it Success-or-Draw Supermap --}
We first review the basics of supermaps.
A supermap that is using the input quantum operations in a fixed order is known as a quantum comb, 
and is the one that can be implemented in the usual quantum circuit model.
In Refs.~\cite{comb1,comb2}, a formulation of a quantum comb is presented.
In order to avoid confusion, we denote quantum operations with a tilde and supermaps with a double tilde. 
For example, given a unitary operator $U$, we denote the corresponding unitary operation by $\tmap{U}$.
A deterministic comb is described by a CCPP supermap with a set of linear constraints.
A probabilistic comb, consists of a success part and a failure part, can be described with two supermaps,
say $\ttsmap{S}$ and $\ttsmap{F}$ respectively, which sum up to a deterministic comb.

Consider the probabilistic supermap transforming unitary operations $\{ \tmap{U} \}$ into CPTP maps $\{ f(\tmap{U}) \}$.
In the usual setting of a probabilistic supermap, this problem is formulated by the constraints
\begin{gather}
\ttsmap{S} (\tmap{U}) = p_U f(\tmap{U}) \\
\ttsmap{S}, \ttsmap{F} \text{ is CCPP} \\
\ttsmap{S}+ \ttsmap{F} \text{ is a deterministic comb}.
\end{gather}

For the success-or-draw supermap, the action on failure is also determined,
and extra constraints on $\ttsmap{F}$ are required.
For the convenience for the following discussions, we also assume that we have $K$ calls to the input unitary operation $\tmap{U}$.
Since any unitary operation is transformed into the identity operation on failure,
the corresponding constraints are given by
\begin{gather}
\ttsmap{S} ( \tmap{U}^{\otimes K} ) = p_U f( \tmap{U} ) \label{eq:reqs1}\\ 
\ttsmap{N} ( \tmap{U}^{\otimes K} ) \propto \tmapid \label{eq:neutralization} \\
\ttsmap{S}, \ttsmap{N} \text{ is CCPP} \label{eq:reqs4} \\
\ttsmap{S}+ \ttsmap{N} \text{ is a deterministic comb}, \label{eq:reqs5}
\end{gather}
where $\tmapid$ denotes the identity operation, indicating that the input state does not change on failure.
Here we use $\ttsmap{N}$ instead of $\ttsmap{F}$ to denote that it corresponds to draw instead of failure, 
and this condition is also known as the neutralization condition introduced in Ref.~\cite{controllization}.

{\it Main Result --}
Theorem~\ref{thm:main} is the main result on the realizability of success-or-draw supermap.
A pictorial interpretation of Theorem~\ref{thm:main} is given by Fig.~\ref{fig:theorem_main}.

\begin{thm}\label{thm:main}
Given a probabilistic comb transforming $d$-dimensional unitary operations $\{ \tmap{U} \}$ to CPTP maps $\{ f(\tmap{U}) \}$
as $\ttsmap{S}_\stgs : \tmap{U} \mapsto p_U f(\tmap{U})$.
Then there exist $\varepsilon > 0$ and a set of probabilistic combs $\ttsmap{S}$ and $\ttsmap{N}$ that sum up to a deterministic comb,
whose actions are given by 
\begin{align}
\ttsmap{S} &: \tmap{U}^{\otimes d} \mapsto \varepsilon p_U f(\tmap{U}) \\
\ttsmap{N} &: \tmap{U}^{\otimes d} \mapsto (1-\varepsilon p_U) \tmapid.
\end{align}
\end{thm}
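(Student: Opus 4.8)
The plan is to work entirely in the Choi (link-product) representation, where a probabilistic comb is a positive semidefinite operator dominated by the Choi operator of some deterministic comb, and its action on an input operation is obtained by linking it against the Choi operator of $\tmap{U}^{\otimes d}$. In this language the statement reduces to exhibiting three positive operators — a total comb $C$ and two branches $S,N$ with $C=S+N$ — such that $C$ obeys the linear trace/causal-order constraints defining a deterministic $d$-slot comb, while the links satisfy $S(\tmap{U}^{\otimes d})=\varepsilon p_U f(\tmap{U})$ and $N(\tmap{U}^{\otimes d})=(1-\varepsilon p_U)\tmapid$. A useful preliminary observation is that the induced action of $C$ is the convex mixture $\varepsilon p_U f(\tmap{U})+(1-\varepsilon p_U)\tmapid$ of two CPTP maps, hence a legitimate channel for every $U$ whenever $\varepsilon p_U\le 1$; so no obstruction lives at the level of the total output channel. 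What must actually be proved is that a single fixed deterministic comb reproduces this $U$-dependent family and splits positively into the two prescribed branches.

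First I would build the success branch directly from the given single-call comb $\ttsmap{S}_\stgs$. Since $\ttsmap{S}_\stgs$ already realizes $\tmap{U}\mapsto p_U f(\tmap{U})$ with one call, I extend it to $d$ slots by letting it act on a single copy of $\tmap{U}$ while fixed ancillary states are fed into the remaining $d-1$ slots and their outputs discarded. Because a fixed state sent through $\tmap{U}$ and traced out contributes a $U$-independent factor of one, tensoring the single-call Choi with these ``absorbers'' yields an operator $S_0$ whose link with $\tmap{U}^{\otimes d}$ is exactly $p_U f(\tmap{U})$; setting $S=\varepsilon S_0$ then gives the required $\varepsilon p_U f(\tmap{U})$. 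This step is routine once the absorbers are chosen trace-preserving on the unused teeth.

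The core of the argument is the construction of the neutralizer $N$, with action proportional to $\tmapid$, completing $\varepsilon S_0$ to a deterministic comb. The essential difficulty is that the same physical process must, on the success flag, apply the $U$-dependent map $f(\tmap{U})$ to $\rho$, yet on the draw flag restore $\rho$ exactly; reconciling these requires coupling $\rho$ to the copies of $\tmap{U}$ only coherently-conditioned on success, so that conditioning on draw disentangles and restores the input. Making this control universal in $U$ is exactly what consumes all $d$ copies: I would use the representation theory of $\tmap{U}^{\otimes d}$ on the symmetric subspace, via the projector $\psym$ and the associated superposition projector $\psup$, together with the identity that the $(d-1)$-fold antisymmetric action of a $d\times d$ unitary equals $\det(U)\,\bar{U}$, i.e.\ $U^{-1}$ up to a determinant phase. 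This conjugation/transpose mechanism lets one re-apply $U^{-1}$ inside the process to cancel an earlier use of $\tmap{U}$, producing a comb whose draw output is proportional to $\tmapid$ for all $d$-dimensional $U$ (the residual global phase being irrelevant at the level of channels).

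I expect the main obstacle to be the simultaneous positivity $0\le \varepsilon S_0\le C$, which is precisely what forces $\varepsilon$ to be strictly positive rather than merely nonnegative. The comb equality constraints on $C=\varepsilon S_0+N$ are linear, so I would treat $\varepsilon S_0$ as a small perturbation of the neutralizer and argue that the universal-neutralization construction can be chosen strictly positive (full rank) on the subspace supporting $S_0$ — intuitively, preparing the coherent control with a small but nonzero success amplitude. Then $C-\varepsilon S_0\ge 0$ for all sufficiently small $\varepsilon$, while the residual $N=C-\varepsilon S_0$ inherits the action $(1-\varepsilon p_U)\tmapid$ by linearity. Verifying that this residual is itself a legitimate branch — positive, correctly linking to $\tmapid$, and jointly satisfying the deterministic-comb constraints — and that the construction does not degenerate to $\varepsilon=0$, is the delicate point; everything else is linear bookkeeping in the Choi picture.
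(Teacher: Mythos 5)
Your overall skeleton tracks the paper's closely: work in the Choi picture, take $S=\varepsilon S_\stgs^{\ical_0\ical_1\ocal_1\ocal_0}\otimes\frac{I^{\ical_2\ocal_2}}{d}\otimes\cdots\otimes\frac{I^{\ical_d\ocal_d}}{d}$ so the success branch consumes one call and trivially absorbs the rest, reduce the neutralization requirement to a statement on the symmetric subspace $\psym$, exploit the $(d-1)$-fold antisymmetric structure, and win positivity perturbatively in $\varepsilon$. But the central step --- the explicit neutralizer --- is where your argument has a genuine gap. Your proposed mechanism is to use $\wedge^{d-1}U=\det(U)\,\bar{U}$ to ``re-apply $U^{-1}$ inside the process to cancel an earlier use of $\tmap{U}$.'' This does not close: $d-1$ parallel calls restricted to the antisymmetric subspace deterministically yield $\bar{U}$ up to phase, but converting $\bar{U}$ into $U^{-1}=\bar{U}^{T}$ requires a transposition, which is not a channel; any gate-teleportation realization of it is itself probabilistic and reintroduces exactly the failure branch you are trying to eliminate. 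Moreover, the draw branch does not need to undo ``an earlier use of $U$'' applied to the input state; it must undo whatever residual, $U$-dependent CP map the failure part of $\ttsmap{S}_\stgs$ left behind, and you give no handle on that object.

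The paper's route is algebraic rather than operational, and it rests on a structural fact you never invoke: because $\ttsmap{S}_\stgs$ sends every unitary to a CPTP map, its $\ocal_0$-traced Choi operator contains no $h_i^{\ical_0}\otimes g_j^{\ical_1}\otimes g_k^{\ocal_1}$ cross terms in a traceless Hermitian basis (Lemma~\ref{lem:unitary2cptp}, which in turn needs $\dim\mathrm{span}\{J_U\}=(d^2-1)^2+1$). Only with that decomposition can the $U$-dependent leakage in the failure operator be split into two families: one cancelled by a mirror term placed on the second slot, the other cancelled by choosing correction coefficients so that the input-side marginal assembles into the totally antisymmetric state $A_d$ on $\ical_1\cdots\ical_d$, which is annihilated under $\psym$ because $\bra{A_d}\, g_j\otimes I\otimes\cdots\otimes I\,\ket{A_d}=\Tr g_j=0$ --- all while respecting the sequential-comb marginals, Eqs.~\eqref{eq:causal_condition_n_1}--\eqref{eq:causal_condition_n_4}. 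Your positivity step is also circular as stated: you claim $N=C-\varepsilon S_0$ ``inherits the action $(1-\varepsilon p_U)\tmapid$ by linearity,'' but that presupposes a deterministic comb $C$ whose action is already the desired convex mixture for every $U$, which is the theorem itself. The correct order, as in the paper, is to construct $N$ directly so that $\Tr_{\ical\ocal}(\psym N\psym)\propto J_{id}$ (Lemma~\ref{lem:neutralization}), verify the comb equalities for $S+N$, and only then use the form $I/d^{d}+\varepsilon(\cdots)$ together with the restricted support $\psup$ of Lemma~\ref{lem:nio2n} to secure positivity for some $\varepsilon>0$.
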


\begin{figure}
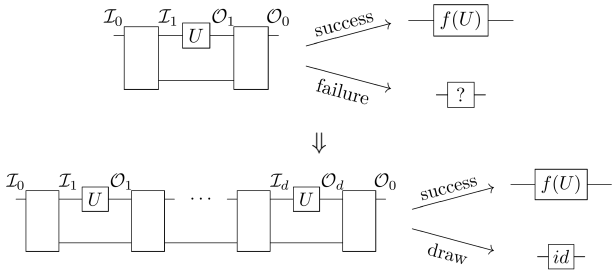

\includegraphics[width=.7\hsize]{comb_sf.png} \\
$\Downarrow$ \\
\includegraphics[width=\hsize]{comb_sn_seq.png}
\caption{A pictorial interpretation of Theorem~\ref{thm:main}.
We consider the case when there exists a probabilistic comb (upper) that transforms unitary operations $U$ into CPTP maps $f(U)$,
whose action is arbitrary on failure.
Theorem~\ref{thm:main} states that in this case, there exists a $d$-slot probabilistic comb (lower) that performs the same action on success,
and performs the identity operation on failure/draw, which corresponds to the preservation of the input state.}
\label{fig:theorem_main}
\end{figure}

The proof of Theorem~\ref{thm:main} is given in Appendix~\cite{sm}, which includes Ref.~\cite{ggm_basis}.
The proof is constructive.
We present a construction of the combs $\ttsmap{S}$ and $\ttsmap{N}$ from the comb $\ttsmap{S}_\stgs$,
more precisely, we show a construction of $S$ and $N$, the Choi operators~\cite{jamiolkowski,choi,comb1,comb2} of $\ttsmap{S}$ and $\ttsmap{N}$, from $S_\stgs$, the Choi operator of $\ttsmap{S}_\stgs$.
The requirements for the combs are given by Eqs.~\eqref{eq:reqs1}-\eqref{eq:reqs5},
which need to be satisfied simultaneously.

While we only require that $S+N$ is a deterministic comb, that is, the input operations are used in a sequential way,
the construction shown in the proof (Eq.~\eqref{eq:construction_nio} of Appendix~\cite{sm}) satisfies an extra condition
\begin{align}
\Tr_{\ocal_0} (S+N) = \Tr_{\ocal_2 \ocal_3 \cdots \ocal_d \ocal_0} (S+N) \otimes \frac{ I_{\ocal_2 \ocal_3 \cdots \ocal_d} }{ d^{d-1} },
\end{align}
where $\ocal_0, \ldots, \ocal_d$ are the Hilbert spaces as shown in Fig.~\ref{fig:theorem_main}.
This condition shows that the comb can be decomposed into two blocks as the quantum circuit shown in Fig.~\ref{fig:depth_two}:
the first block uses only a single unitary operation, while the second block uses the remaining $d-1$ unitary operations in parallel.
Such a structure indicates that while the number of calls increases with $d$, the depth of this comb is constant as two.
Note that we can assume this structure if we only consider non-zero success probability,
and in general, adding this assumption would decrease the success probability.

\begin{figure}
\includegraphics[width=0.6\hsize]{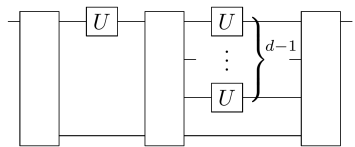}
\caption{The constructed success-or-draw comb in Appendix~\cite{sm} has an extra structure:
it uses one copy of the unitary operation at first, and then uses the remaining $d-1$ copy of the unitary operation in parallel.}
\label{fig:depth_two}
\end{figure}

When the indefinite causal order~\cite{indefinite1,indefinite2,indefinite3,indefinite_purification} is considered, 
the construction can be replaced by a simpler one by exploiting the symmetry as Remark~\ref{rem:indefinite_causal}, 
and a higher success probability can be achieved in general.

{\it Unitary Inversion --}
In this section, we analyze the probabilistic unitary inversion as a success-or-draw supermap.
We only consider the two-dimensional case $d=2$ here.
The optimal success probability can be obtained by the following SDP
\begin{gather}
\max\quad p \\
\mathrm{s.t.}\quad \Tr_{\ical \ocal} [ S ( {J_{U_i}}^{\otimes K} )^T ] = p {J_{U_i^{-1}}} \\
\Tr_{\ical \ocal} [ N ( {J_{U_i}}^{\otimes K} )^T ] \leq d^K {J_{id}} \\
S \geq 0, N \geq 0\\
S+N \text{ is a deterministic comb},
\end{gather}
where $S$ and $N$ denote the Choi operators of the combs corresponding to success and draw,
$J_{\Lambda}$ denotes the Choi operator of a quantum operation $\Lambda$,
and $\{ U_i \}$ is a finite set of unitary operators that the corresponding Choi operators form 
a basis of the linear span of $\mathrm{span} \{ J_U^{\otimes k} \}$ (see Refs.~\cite{unitary_inversion1, unitary_inversion2}).

For $K=2$, Theorem~\ref{thm:main} indicates that the optimal success probability is positive as $p>0$.
In fact, a numerical solution to this SDP shows that the optimal success probability is $p =1/3$.
This problem is also considered in Ref.~\cite{unitary_inversion1}, 
where an explicit quantum circuit with the success-or-draw structure was presented, which success probability is $1/4$.

For comparison, we briefly state the protocol presented in Ref.~\cite{unitary_inversion1}.
The protocol is similar to the teleportation protocol, which generates the state $\sigma_x^i \sigma_z^j \ket{\psi}$ before correction, 
where $\ket{\psi}$ is the initial state and  $(i,j) = (0,0),(0,1),(1,0),(1,1)$ is the outcome of the Bell measurement.
For the two-dimensional unitary inversion protocol presented in Ref.~\cite{unitary_inversion1},
we can obtain $U^{-1} \sigma_x^i \sigma_z^j \ket{\psi}$ with a single use of $U$ by a small modification to the teleportation or  gate teleportation protocol.
This protocol successfully achieves unitary inversion when $(i,j) = (0,0)$.
When it fails, on the other hand, we can obtain the state $\sigma_x^i \sigma_z^j \ket{\psi}$ by an extra use of $U$,
and the input state $\ket{\psi}$ can be recovered by applying $(\sigma_x^i \sigma_z^j)^{-1}$, which achieves the neutralization supermap.

One difference between the optimal success-or-draw protocol we obtained and the protocol presented in Ref.~\cite{unitary_inversion1} is that the latter is not only a success-or-draw protocol,
but it has another feature: it can be regarded as a success-or-resetting protocol.
The latter protocol uses a single copy of a unitary operation to obtain its inverse,
and when it fails, it results in a state that is ``resettable'' to be the input state by another unitary operation.
Such a success-or-resetting protocol may have an advantage as we can choose whether to continue the protocol by resetting after we know if it succeeded.

For $K=1$, we also prove that the optimal success probability is $p=0$,
which means it is not possible to have a success-or-draw protocol.
This result gives an explicit example that a success-or-draw protocol is not available.
The proof is given in Appendix~\cite{sm}.

\begin{figure}
\includegraphics[width=0.9\hsize]{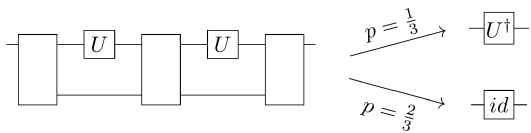}
\caption{The success-or-draw protocol for unitary inversion.
When the unitary operation can be used twice, the optimal success probability for success is $1/3$,
whereas that of the optimal success-or-resetting protocol is $1/4$.
In either case, the output on failure is the identity operation, which means the initial input state is preserved 
and it is possible to run the same protocol again with extra uses of the input unitary operation.}
\end{figure}

{\it Discussions --}
We have introduced a new structure for probabilistic supermap which we name success-or-draw structure.
A probabilistic supermap with the success-or-draw structure can amplify its success probability by more calls of the input quantum operation in a sequential manner, which scales exponentially to one in the number of calls.
A mathematical formulation for the success-or-draw supermap was presented.
We considered the case where the input quantum operation is a unitary operation, and we proved that
any probabilistic supermap transforming unitary operations into CPTP maps can become a success-or-draw supermap by adding the number of copies of the unitary operation.

We then analyzed the problem of the two-dimensional unitary inversion.
When two copies of an input unitary operation are allowed,
Theorem~\ref{thm:main} guarantees the existence of a non-trivial solution to this problem, 
and we also obtained the optimal solution numerically using SDP.
A success-or-draw protocol for this problem was also presented previously in Ref.~\cite{unitary_inversion1},
and our numerical calculation shows that a higher success probability can be achieved if we only require the success-or-draw structure.
We also proved that a success-or-draw protocol does not exist with a single copy of an input unitary operation.

Our result shows an advantage of probabilistic supermap, as it allows a success probability exponentially close to one in the sequential case.
The number of calls is also undetermined for a success-or-draw protocol, and an average number of calls may become a suitable measure in practice.
We also hope that the success-or-draw structure helps in a simpler physical implementation of supermaps.

\section*{Acknowledgments}

This work was supported by MEXT Quantum Leap Flagship Program (MEXT Q-LEAP) Grant Number JPMXS0118069605 and JPMXS0120351339, 
Japan Society for the Promotion of Science (JSPS) by KAKENHI grant No.~17H01694, 18H04286, 18K13467, 
and Advanced Leading Graduate Course for Photon Science (ALPS).
M.~T.~Quintino also acknowledges the Austrian Science Fund (FWF) through the SFB project ``BeyondC'' (sub-project F7103), 
a grant from the Foundational Questions Institute (FQXi) as part of the Quantum Information Structure of Spacetime (QISS) Project (qiss.fr).
The opinions expressed in this publication are those of the authors and do not necessarily reflect the views of the John Templeton Foundation.
This project has received funding from the European Union's Horizon 2020 research and innovation programme under the Marie {Sk\l odowska}-Curie grant agreement No.~801110 and the Austrian Federal Ministry of Education, Science and Research (BMBWF). 
It reflects only the authors' view, the EU Agency is not responsible for any use that may be made of the information it contains.

\bibliographystyle{utphys}
\bibliography{refs}

\clearpage

\onecolumngrid
\appendix

\renewcommand{\theequation}{S\arabic{equation}}
\setcounter{equation}{0}

\section{Appendix}

\section{Proof of Theorem~\ref{thm:main}}\label{ap:proof1}

\begin{proof}[Sketch of the proof]
In order to prove Theorem~\ref{thm:main}, we first prove Lemma~\ref{lem:neutralization} and Lemma~\ref{lem:unitary2cptp}, 
which indicate that it is enough to prove Theorem~\ref{thm:main2}.
Here we state the sketch of the proof.

Lemma~\ref{lem:neutralization} gives a sufficient condition of the neutralization condition Eq.~\eqref{eq:neutralization}.
The neutralization condition Eq.~\eqref{eq:neutralization} is difficult to use for many reasons,
for example, the probability for neutralization is not constant in general. 
In Theorem~\ref{thm:main}, the probability of neutralization can depend on $U$.
A direct way to rewrite Eq.~\eqref{eq:neutralization} is to add new variables $\{ q_U \}$ that correspond to the probability depend on $U$ and rewrite as
\begin{gather}
\ttsmap{N} ( \tmap{U}^{\otimes K} ) = q_U \cdot \tmapid.
\end{gather}
Since the corresponding Choi operators are positive, and that for r.h.s. is a rank-1 operator, 
this condition can be reduced to an inequality of the form
\begin{gather}
\ttsmap{N} ( \tmap{U}^{\otimes K} ) \leq c \cdot \tmapid,
\end{gather}
where $c$ is a constant determined by the normalization conditions.
This condition is equivalent to the one given by Eq.~\eqref{eq:neutralization}, 
but it is still difficult to analyze because it is necessary to consider all unitary operations.
Note that in numerical analysis, it is possible to use this condition directly, as we will state in the analysis for the unitary inversion.
In Lemma~\ref{lem:neutralization}, we show a sufficient condition by considering a symmetric subspace,
that is, $\tmap{U}^{\otimes K}$ is invariant under permutations of each input operations.

Lemma~\ref{lem:unitary2cptp} gives a characterization of the Choi operator of a probabilistic comb transforming unitary operations to CPTP maps,
which is the assumption of Theorem~\ref{thm:main}.
We consider a Hermitian basis which consists of an identity operator and traceless operators, 
and show that the decomposition of the corresponding Choi operator consists of only certain terms.
Using a basis with an identity operator and traceless operators is convenient for considering the causal condition,
because the causal condition is usually given by a set of equations consist of partial traces,
and the traceless terms help in determining which terms do not affect the causal condition.

By considering Lemma~\ref{lem:neutralization} and Lemma~\ref{lem:unitary2cptp}, 
it is enough to prove Theorem~\ref{thm:main2} in order to prove Theorem~\ref{thm:main}.
The proof of Theorem~\ref{thm:main2} can be further divided into two parts:
the first part presents a construction of the Choi operators $S$ and the partial trace of $N$ given by $\nio := \Tr_{\ocal_0} N$ from $S_\stgs$;
the second part is mainly separated into Lemma~\ref{lem:nio2n}, which presents a construction of $N$ from $\nio$.

In the first part of the proof, we first present a trivial set of Choi operators $S$ and $F$ from $S_\stgs$, 
where $F$ is a Choi operator which does not necessarily satisfy the neutralization condition Eq.\eqref{eq:neutralization} for $N$, 
but satisfies all the remaining conditions given by Eqs.~\eqref{eq:reqs1},\eqref{eq:reqs4},\eqref{eq:reqs5}.
Moreover, $F$ also has a similar decomposition given by Lemma~\ref{lem:unitary2cptp}.
We then present a construction of $\nio$ from $F$, 
where the neutralization condition is also satisfied in addition to the positivity Eq.~\eqref{eq:reqs4} and the causal conditions Eq.~\eqref{eq:reqs5}. 
The positivity of $\nio$ is satisfied by taking the operator to be a strictly positive full-rank operator,
and the main difficulty is to satisfy the causal condition and the neutralization condition simultaneously.
The decomposition given by Lemma~\ref{lem:unitary2cptp} is convenient for the causal condition
in the sense that it is possible to add certain traceless terms that do not affect the causal condition,
and we give a class of Choi operators that satisfies the causal condition.
Then, we show that among this class of Choi operators, 
it is possible to cancel the terms that do not satisfy the neutralization condition 
by using the properties of the symmetric subspace considered in Lemma~\ref{lem:neutralization}.
Thus, it is possible to satisfy the causal condition and the neutralization condition simultaneously.

In the second part of the proof, we construct $N$ from $\nio$.
In this part, the causal condition and the neutralization condition are easily satisfied because the condition is similar to the first part.
On the other hand, the positivity condition becomes difficult.
Unlike in the first part, since the target operation is the identity channel, which Choi operator is rank-1, 
we cannot take the Choi operator $N$ to be a full-rank operator, which is robust in positivity.
To solve this problem, we consider a subspace of the Hilbert space that $N$ is on, 
and we show a construction of $N$ that lies in this subspace and is a strictly positive full-rank operator in the subspace.
Thus, the positivity of $N$ can be satisfied.

\end{proof}

We first clarify the following notations.
In the following, we describe quantum operations and quantum supermaps by the Choi operators, defined via Choi-{\Jamiolkowski} isomorphism~\cite{jamiolkowski,choi}.
For a quantum operation $\tmapf{\Lambda}: L(\hcal_{in}) \to L(\hcal_{out})$, the corresponding Choi operator is given by
\begin{align}
J_\Lambda^{\hcal_{in} \hcal_{out}} := \sum_{ij} \ketbra{i}{j} \otimes \tmapf{\Lambda}( \ketbra{i}{j}) \in L(\hcal_{in} \otimes \hcal_{out}),
\end{align}
where $\{ \ket{i} \}$ is an orthonormal basis for $\hcal_{in}$.
In this paper, the Hilbert spaces of an operator are usually denoted as superscript, and may be omitted when it is trivial from the context.
The condition that a map $\tmapf{\Lambda}$ is completely positive (CP) corresponds to the positivity of $J_\Lambda$ as $J_\Lambda \geq 0$,
and the condition that it is trace-preserving (TP)  corresponds to $\Tr_{\hcal_{out}} J_\Lambda = I^{\hcal_{in}}$.
In this paper, since unitary operations play an important role, 
we also denote a unitary operation $\tmap{U}$ with the corresponding unitary operator $U$,
and its Choi operator as $J_U := \sum_{ij} \ketbra{i}{j} \otimes \tmap{U}( \ketbra{i}{j})$.
Note that a unitary operation is also a unital map, which satisfies $\Tr_{\hcal_{in}} J_\Lambda = I^{\hcal_{out}}$ in addition to the condition for a CPTP map.
The identity channel $\tmapid$ also plays an important role,
and we denote the corresponding Choi operator as $J_{id}$ instead of $J_I = d \phi^+ = d \proj{\phi^+}$, 
where $\ket{\phi^+} = \sum_{i} (1/\sqrt{d}) \ket{ii}$ is the maximally entangled state.
The action of a quantum operation $J_\Lambda$ on a quantum state $\rho$ is given by 
$\Tr_{\hcal_{in}} [J_\Lambda (\rho^T \otimes I^{\hcal_{out}} ) ]$.
In this paper, we also omit the identity operator, such as $\Tr_{\hcal_{in}} [J_\Lambda \rho^T ]$, when it is trivial from the context for convenience.

Next, we consider a $K$-slot deterministic quantum supermap $\ttsmap{C}: [\ical \to \ocal] \to [ \ical_0 \to \ocal_0]$,
where the Hilbert spaces are represented by $\ical_0, \ical_1, \ocal_1, \ldots, \ocal_K, \ocal_0$ as shown in Fig.~\ref{fig:theorem_main} in which case $K = d$,
and $\ical$ and $\ocal$ are the abbreviations of $\ical := \ical_1 \ical_2 \cdots \ical_K$ and  $\ocal := \ocal_1 \ocal_2 \cdots \ocal_K$.
In this paper, we also assume that $d_{\ical_i} = d_{\ocal_j} =: d$ for $i,j \geq 1$, and $d_{\ical_0} = d_{\ocal_0} =: d_0$.
The corresponding Choi operator $C$ is defined via Choi-{\Jamiolkowski} isomorphism as shown in Refs.~\cite{comb1,comb2}.
The completely CP preserving (CCPP) condition of $\ttsmap{C}$ is given by the positivity $C \geq 0$ similar to the quantum operation case.
The condition that the supermap uses the input operations in a fixed order, 
also known as the causal condition that it is a sequential comb, is given by the set of equations
\begin{align}
\Tr_{\ocal_0} C &= C^{(K)} \otimes \frac{I^{\ocal_{K}}}{d}  \label{eq:causal_condition_c_1} \\
\Tr_{\ical_k} C^{(k)} &= C^{(k-1)} \otimes \frac{I^{\ocal_{k-1}}}{d}  \quad  (2 \leq k \leq K) \label{eq:causal_condition_c_2} \\
\Tr_{\ical_1} C^{(1)} &= (\Tr C ) \frac{I^{\ical_0}}{d_0}   \label{eq:causal_condition_c_3}
\end{align}
where $C^{(K)} := \Tr_{\ocal_K \ocal_0} C$ and $C^{(k-1)} := \Tr_{\ocal_{k-1} \ical_k} C^{(k)}$ for $k = 2, \ldots, K$.
The normalization condition is chosen to be $\Tr\, C = d_{\ical_0} d_{\ocal}$ for convenience.
For example, $C := I^{\ical_0} \otimes \frac{I^{\ical_1}}{d_{\ical_1}} \otimes I^{\ocal_1} \otimes \cdots \otimes I^{\ocal_K} \otimes \frac{I^{\ocal_0}}{d_{\ocal_0}}$ is a deterministic comb.
Note that in our problem, it is required that $d_{\ical_i} = d_{\ocal_j} = d$ for $i,j \geq 1$, and $d_{\ical_0} = d_{\ocal_0} = d_0$.
The action of a quantum supermap $C$ on $K$ copies of a quantum operation $J_\Lambda^{\otimes K}$ is given by 
$\Tr_{\ical \ocal} [C (J_\Lambda^{\otimes K})^T ]$, 
where we omitted the identity operator of $J_\Lambda^{\otimes K} \otimes I^{\ical_0 \ocal_0}$.

For a probabilistic supermap $\ttsmap{S}: [\ical \to \ocal] \to [ \ical_0 \to \ocal_0]$,
the condition that the corresponding Choi operator $S$ satisfies is given by the following:
there exists an operator $F \geq 0$, which corresponding to the supermap on failure $\ttsmap{F}$, that $S + F$ is a deterministic supermap,
i.e., $C = S+F$ satisfies the conditions stated above.
While we use the word success and failure here, there is no mathematical difference for $S$ and $F$ except that the action of $S$ is the target supermap given by $\Tr_{\ical \ocal} [S (J_\Lambda^{\otimes K})^T ]$ as we require when the input operations are $K$ copies of $J_\Lambda$. 
For the operator $F$ corresponding to failure, 
the action is given in a similar way by $\Tr_{\ical \ocal} [F (J_\Lambda^{\otimes K})^T ]$, on which we do not have any constraint in general.
However, when we assume this operator to be proportional to the Choi operator of the identity channel $J_{id}$, 
which we call as the neutralization condition, this probabilistic supermap becomes a success-or-draw supermap.
In this case, we denote the corresponding supermap as $\ttsmap{N}$ and the Choi operator $N$ to clarify that they correspond to a neutralization supermap.

For Lemma~\ref{lem:neutralization},
we define the following operators.
We first define the permutation operator $P^\ical_\sigma$ and $P^\ocal_\sigma$ that permute systems $\ical$ and $\ocal$ according to the permutation $\sigma$.
The permutation of input operations is given by $P^{\ical \ocal}_\sigma := P^{\ical}_\sigma \otimes P^{\ocal}_\sigma$,
which simultaneously permutes the input system and the output system of a single input operation according to the permutation $\sigma$.
The symmetric subspace of input operations $\psym$ is given by
\begin{align}
\psym := \sum_\sigma P^{\ical \ocal}_\sigma = \sum_\sigma P^{\ical}_\sigma \otimes P^{\ocal}_\sigma.
\end{align}

For Lemma~\ref{lem:unitary2cptp},
we define a set of Hermitian operators $\{ g_i \}_{i=0}^{d^2-1}$ that forms the operator basis for $d$-dimensional Hermitian operators,
with $g_0 := I_d$, others being traceless, and the orthogonality $\Tr g_i g_j = d \delta_{ij}$~\cite{ggm_basis}.
For example, the Pauli matrices for $d=2$, and Gell-Mann matrices for $d=3$.
We also define the set for $d_0$-dimensional Hermitian operators as $\{ h_i \}_{i=0}^{d_0^2-1}$.
In Lemma~\ref{lem:unitary2cptp}, we rewrite the condition that a comb transforms unitary operations to CPTP maps
in terms of Choi operator and the Hermitian operator basis.

In order to prove Theorem~\ref{thm:main}, we first consider Lemma~\ref{lem:neutralization} and Lemma~\ref{lem:unitary2cptp}, 
which shows that it is enough to prove Theorem~\ref{thm:main2}.

\begin{lem}\label{lem:neutralization}
If $\Tr_{\ical \ocal} (\psym N \psym) \propto J_{id}$,
then $N$ neutralizes all unitary operations as $\ttsmap{N} ( \tmap{U}^{\otimes K}) \propto \tmapid$.
\end{lem}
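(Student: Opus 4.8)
The plan is to exploit two structural facts: that $K$ copies of a unitary yield a Choi operator supported entirely on the symmetric subspace of the $K$ input slots, and that $J_{id}$ is rank one. First I would record that the quantity to be controlled is $\ttsmap{N}(\tmap{U}^{\otimes K}) = \Tr_{\ical\ocal}[N (J_U^{\otimes K})^T]$. Since $J_U = d\,\proj{\phi_U}$ is rank one with $\ket{\phi_U}$ maximally entangled, the operator $M := (J_U^{\otimes K})^T$ is rank one and its supporting vector $\ket{\phi_U}^{\otimes K}$ is invariant under every pair-permutation $P^{\ical\ocal}_\sigma$ (transposition, taken in the computational basis, preserves this invariance since $P^{\ical\ocal}_\sigma$ is real). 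Hence $P^{\ical\ocal}_\sigma M = M P^{\ical\ocal}_\sigma = M$ for all $\sigma$, so $\psym M \psym = (K!)^2 M$, and using the cyclicity of $\Tr_{\ical\ocal}$ with respect to the operators $\psym$ (which act only on $\ical\ocal$) I would write
\begin{align}
\ttsmap{N}(\tmap{U}^{\otimes K}) = \frac{1}{(K!)^2}\,\Tr_{\ical\ocal}\!\big[\psym N \psym\, M\big].
\end{align}
This isolates the hypothesized object $\psym N \psym$ against the $U$-dependent factor $M$.

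The decisive step is to combine the hypothesis with positivity. Because $N \geq 0$, the symmetrized operator $\psym N \psym$ is positive semidefinite, and by assumption its marginal $\Tr_{\ical\ocal}(\psym N \psym)$ is proportional to $J_{id}$, which is \emph{rank one} on $\ical_0\ocal_0$. I would then invoke the elementary fact that a positive semidefinite operator whose partial trace over one factor is rank one must factorize as a tensor product: there exists $R \geq 0$ on $\ical\ocal$ with
\begin{align}
\psym N \psym = J_{id}^{\ical_0\ocal_0} \otimes R^{\ical\ocal}.
\end{align}
The justification is that for any $\ket{\chi}$ in $\ical_0\ocal_0$ orthogonal to the supporting vector of $J_{id}$, the operator $(\bra{\chi}\otimes I_{\ical\ocal})\,\psym N \psym\,(\ket{\chi}\otimes I_{\ical\ocal})$ is positive semidefinite with vanishing trace, hence zero; this pins the support of $\psym N \psym$ to $\mathrm{span}\{J_{id}\text{-vector}\}\otimes \ical\ocal$, forcing the product form.

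Substituting this product form collapses the $U$-dependence into a scalar, since $\Tr_{\ical\ocal}$ leaves $J_{id}^{\ical_0\ocal_0}$ untouched:
\begin{align}
\ttsmap{N}(\tmap{U}^{\otimes K}) = \frac{\Tr_{\ical\ocal}[R\,(J_U^{\otimes K})^T]}{(K!)^2}\, J_{id} \propto J_{id},
\end{align}
which is exactly the neutralization statement $\ttsmap{N}(\tmap{U}^{\otimes K}) \propto \tmapid$, with the expected $U$-dependent proportionality constant $q_U = \Tr_{\ical\ocal}[R(J_U^{\otimes K})^T]/(K!)^2$. I expect the main obstacle to be the factorization step of the second paragraph: the whole argument hinges on $J_{id}$ being rank one, and it requires carefully checking that transposition together with the regrouping of the $K$ pairs $(\ical_k,\ocal_k)$ leaves $M$ genuinely invariant under $P^{\ical\ocal}_\sigma$ (not merely invariant up to the symmetric subspace), so that the cyclic manipulation with $\psym$ in the first display is legitimate. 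The remaining manipulations are routine.
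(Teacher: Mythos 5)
Your proof is correct, and while it shares the same two ingredients as the paper's (the support of $(J_U^{\otimes K})^T$ inside the symmetric subspace, and the rank-one nature of $J_{id}$ combined with positivity), it deploys them along a genuinely different route. The paper never factorizes $\psym N \psym$: instead it writes $d^K I = A + J_U^{\otimes K}$ with $A \geq 0$, splits $\Tr_{\ical\ocal}[\psym N \psym]$ into two positive summands dominated by a multiple of the rank-one $J_{id}$, and concludes that each summand --- in particular $\Tr_{\ical\ocal}[(\psym N \psym)(J_U^{\otimes K})^T]$ --- is itself bounded by a multiple of $J_{id}$, hence proportional to it. You instead prove the structural statement $\psym N \psym = J_{id}^{\ical_0\ocal_0} \otimes R$ directly from the rank-one marginal (your block-vanishing argument for PSD operators is sound, including the implicit step that vanishing diagonal blocks of a PSD operator kill the off-diagonal blocks), and then the $U$-dependence collapses into the explicit scalar $q_U = \Tr[R\,(J_U^{\otimes K})^T]/(K!)^2$. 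Your version buys a cleaner, reusable characterization of $N$ on the symmetric subspace and an explicit draw probability; the paper's version is shorter because it never needs the tensor-product form, only the operator inequality $\Tr_{\ical\ocal}[N(J_U^{\otimes K})^T] \leq d^{2K} J_{id}$. A minor point in your favor: you track the factor $(K!)^2$ coming from the paper's convention $\psym = \sum_\sigma P_\sigma^{\ical\ocal}$ (a sum, not a projector), which the paper's line $J_U^{\otimes K} = \psym J_U^{\otimes K} \psym$ elides; this has no bearing on the proportionality claim.
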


\begin{proof}
Note that the if condition is equivalent to $\Tr_{\ical \ocal} (\psym N \psym) \leq d^K J_{id}$, 
because of the normalization condition $\Tr N \leq d^K d_0$.

For any input channel ${J_U}$, ${J_U}^{\otimes K} \leq d^K I$ holds and $A := d^K I - {J_U}^{\otimes K} \geq 0$. Thus
\begin{align}
J_{id} &\geq \Tr_{\ical \ocal} [(\psym N \psym) / d^K] \\
&= \Tr_{\ical \ocal} [(\psym N \psym) (A + {J_U}^{\otimes K})^T /d^{2K}] \\
&= \Tr_{\ical \ocal} [(\psym N \psym) A^T /d^{2K}] + \Tr_{\ical \ocal} [(\psym N \psym) ({J_U}^{\otimes K})^T /d^{2K}]
\end{align}
holds.
Here both $\Tr_{\ical \ocal} [(\psym N \psym) A^T / d^{2K}]$ and $\Tr_{\ical \ocal} [(\psym N \psym) ({J_U}^{\otimes K})^T/d^{2K}]$ are positive operators,
and $J_{id}$ is a rank-1 operator,
the inequality can be reduced to
\begin{align}
\Tr_{\ical \ocal} [(\psym N \psym) ({J_U}^{\otimes K})^T/d^{2K}] \leq J_{id}.
\end{align}
Since ${J_U}^{\otimes K} $ is in the symmetric subspace, that is, ${J_U}^{\otimes K}  = \psym {J_U}^{\otimes K} \psym$,
we obtain 
\begin{align}
 \Tr_{\ical \ocal} [N  ({J_U}^{\otimes K})^T] = \Tr_{\ical \ocal} [(\psym N \psym)  ({J_U}^{\otimes K})^T] \leq d^{2K} J_{id},%
\end{align}
that is, $\ttsmap{N} ( \tmap{U}^{\otimes K}) \propto \tmapid$.
\end{proof}

\begin{lem}\label{lem:unitary2cptp}
If a one-slot probabilistic comb $S_\stgs^{\ical_0 \ical_1 \ocal_1 \ocal_0}$ transforms unitary operations to CPTP maps,
then $S_\stgs^{\ical_0 \ical_1 \ocal_1} := \Tr_{\ocal_0} S_\stgs^{\ical_0 \ical_1 \ocal_1 \ocal_0}$ has a decomposition satisfying
\begin{align}
S_\stgs^{\ical_0 \ical_1 \ocal_1}  = \frac{I^{\ical_0} }{ d_0 } \otimes \Tr_{\ical_0} S_\stgs^{\ical_0 \ical_1 \ocal_1} 
+ \sum_{i=1}^{d_0^2-1} \sum_{j=1}^{d^2-1} \alpha_{ij} h_i^{\ical_0} \otimes [ g_j^{\ical_1} \otimes I^{\ocal_1} ] 
+ \sum_{i=1}^{d_0^2-1} \sum_{j=1}^{d^2-1} \beta_{ij} h_i^{\ical_0} \otimes [ I^{\ical_1} \otimes g_j^{\ocal_1} ],
\end{align}
where $\{ \alpha_{ij} \}$ and  $\{ \beta_{ij} \}$ are real coefficients.
\end{lem}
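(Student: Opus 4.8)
The plan is to extract, from the hypothesis that $S_\stgs$ sends unitary operations to CPTP maps, a single linear constraint on $S_\stgs^{\ical_0 \ical_1 \ocal_1} := \Tr_{\ocal_0} S_\stgs^{\ical_0 \ical_1 \ocal_1 \ocal_0}$ for each unitary $U$, and then to read off the vanishing of whole families of expansion coefficients using the linear independence of the associated matrix‑coefficient functions on the unitary group. The first step is the translation of the hypothesis. By definition the output Choi operator on $\ical_0 \ocal_0$ is $\Tr_{\ical_1 \ocal_1}[S_\stgs^{\ical_0 \ical_1 \ocal_1 \ocal_0} (J_U^T)] = p_U J_{f(U)}$, and the trace‑preserving property of $f(\tmap{U})$ reads $\Tr_{\ocal_0} J_{f(U)} = I^{\ical_0}$. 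Tracing out $\ocal_0$ therefore gives, for every unitary $U$,
\begin{align}
\Tr_{\ical_1 \ocal_1}\!\big[ S_\stgs^{\ical_0 \ical_1 \ocal_1} (J_U^T) \big] = p_U\, I^{\ical_0}.
\end{align}
This is the only consequence of the hypothesis I will use; notably the positivity of $S_\stgs$ and the causal (comb) conditions play no role in establishing the decomposition.

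Next I would expand $S_\stgs^{\ical_0 \ical_1 \ocal_1}$ in the Hermitian product basis, $S_\stgs^{\ical_0 \ical_1 \ocal_1} = \sum_{i,j,k} c_{ijk}\, h_i^{\ical_0} \otimes g_j^{\ical_1} \otimes g_k^{\ocal_1}$ with real coefficients $c_{ijk}$, and compute the left‑hand side above term by term. Using $\Tr[(g_j \otimes g_k) J_U^T] = \Tr[J_U (g_j^T \otimes g_k^T)]$ together with the defining formula for $J_U$, one finds that this scalar equals $d$ when $j=k=0$, vanishes whenever exactly one of $j,k$ equals $0$ (since the surviving factor is then $\Tr g_j$ or $\Tr g_k = 0$), and for $j,k \geq 1$ equals a matrix coefficient of the adjoint action $X \mapsto U^\dagger X U$. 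Since the $h_i^{\ical_0}$ are linearly independent and $I^{\ical_0} = h_0^{\ical_0}$, matching the two sides in each $h_i$‑component yields, for every $i \geq 1$,
\begin{align}
d\, c_{i00} + \sum_{j,k \geq 1} c_{ijk}\, \Tr\!\big[ (g_j \otimes g_k) J_U^T \big] = 0 \qquad \text{for all } U.
\end{align}

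The crux—and the step I expect to be the main obstacle—is to argue that this single identity forces $c_{i00} = 0$ and $c_{ijk} = 0$ for all $j,k \geq 1$. For this I would invoke the linear independence of the functions $\{U \mapsto 1\} \cup \{U \mapsto \Tr[(g_j \otimes g_k) J_U^T]\}_{j,k \geq 1}$ on the unitary group. After an invertible real recombination by the fixed Gram matrix $\Tr[g_j^T g_l]$, the functions in the second family are exactly the entries of the matrix of the adjoint representation $X \mapsto U^\dagger X U$ in the traceless basis $\{ g_l \}_{l \geq 1}$. This representation is irreducible (the adjoint representation of $SU(d)$) and inequivalent to the trivial representation carried by the constant function, so by Schur orthogonality and the Peter--Weyl theorem all of these matrix coefficients, together with the constant, are linearly independent. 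Hence every coefficient in the displayed identity must vanish, giving $c_{i00} = 0$ and $c_{ijk} = 0$ for $j,k \geq 1$ whenever $i \geq 1$. Care is needed here with the transpose introduced by the Choi convention (which only relabels $\{ g_j \}$ by the still‑complete basis $\{ g_j^T \}$) and with the passage from $U(d)$ to $SU(d)$ (global phases cancel in the adjoint action, and the adjoint rep is irreducible for $d \geq 2$).

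Finally I would reassemble the surviving terms. The $i=0$ part, $\sum_{j,k} c_{0jk}\, h_0^{\ical_0} \otimes g_j^{\ical_1} \otimes g_k^{\ocal_1}$, equals $\frac{I^{\ical_0}}{d_0} \otimes \Tr_{\ical_0} S_\stgs^{\ical_0 \ical_1 \ocal_1}$ because $\Tr_{\ical_0} h_i = d_0 \delta_{i0}$ and $h_0^{\ical_0} = I^{\ical_0}$. For $i \geq 1$ only the terms with $(j \geq 1, k = 0)$ and $(j = 0, k \geq 1)$ remain, which are precisely the two sums with coefficients $\alpha_{ij} := c_{ij0}$ and $\beta_{ij} := c_{i0j}$. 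This reproduces the claimed decomposition.
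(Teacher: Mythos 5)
Your proof is correct and follows the same overall strategy as the paper's: expand $S_\stgs^{\ical_0 \ical_1 \ocal_1}$ in the Hermitian product basis, reduce the hypothesis to the single marginal condition $\Tr_{\ical_1 \ocal_1}[S_\stgs^{\ical_0 \ical_1 \ocal_1}(J_U^T)] \propto I^{\ical_0}$, and eliminate the unwanted coefficients using the irreducibility of the adjoint action of $U$. The only real difference is how the representation-theoretic input is packaged. The paper proves (Lemma~\ref{lem:span_dim}) that $\dim \mathrm{span}\{J_U\} = (d^2-1)^2+1$ by computing the Haar average $Q = \int dU\, \pproj{U^*}\otimes\pproj{U}$ with Schur's lemma, concludes that every $g_j \otimes g_k$ with $j,k \geq 1$ (and $I\otimes I$) lies in $\mathrm{span}\{J_U\}$, and substitutes these operators into the linear constraint; you work on the dual side, reading the constraint as a linear relation among the functions $U \mapsto \Tr[(g_j\otimes g_k)J_U^T]$ and invoking Peter--Weyl/Schur orthogonality to get linear independence of the adjoint matrix coefficients together with the constant function. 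These are annihilator-dual formulations of the same fact, so neither route is more elementary or more general. One small point in your favor: you explicitly carry the coefficients $c_{i00}$ of $h_i^{\ical_0}\otimes I^{\ical_1}\otimes I^{\ocal_1}$ for $i\geq 1$ and show they vanish, whereas the paper's ``general decomposition'' in Eq.~\eqref{eq:lemma2_general_decomposition} silently omits those terms (they are eliminated by the same argument, via $I\otimes I \in \mathrm{span}\{J_U\}$, but the paper does not spell this out).
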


\begin{proof}

The Choi operator $S_\stgs^{\ical_0 \ical_1 \ocal_1}$ can always be decomposed as
\begin{align}
S_\stgs^{\ical_0 \ical_1 \ocal_1}  &=  \frac{I^{\ical_0} }{ d_0 } \otimes \Tr_{\ical_0} S_\stgs^{\ical_0 \ical_1 \ocal_1} 
+ \sum_{i=1}^{d_0^2-1} \sum_{j=1}^{d^2-1} \alpha_{ij} h_i^{\ical_0} \otimes [ g_j^{\ical_1} \otimes I^{\ocal_1} ] 
+ \sum_{i=1}^{d_0^2-1} \sum_{j=1}^{d^2-1} \beta_{ij} h_i^{\ical_0} \otimes [ I^{\ical_1} \otimes g_j^{\ocal_1} ] \notag \\
&+ \sum_{i=1}^{d_0^2-1} \sum_{j,k=1}^{d^2-1} \gamma_{ijk} h_i^{\ical_0} \otimes [ g_j^{\ical_1} \otimes g_k^{\ocal_1} ], \label{eq:lemma2_general_decomposition}
\end{align}
and it is enough to show that $\gamma_{ijk} = 0$ for all $i,j,k \geq 1$.

From the assumption, 
$\Tr_{\ical_1 \ocal_1} [ S_\stgs^{\ical_0 \ical_1 \ocal_1 \ocal_0} (J_U^T)^{\ical_1 \ocal_1} ] $ is proportional to the Choi operator of a CPTP map, which satisfies
\begin{align}
\Tr_{\ocal_0} \Tr_{\ical_1 \ocal_1} [ S_\stgs^{\ical_0 \ical_1 \ocal_1 \ocal_0} (J_U^T)^{\ical_1 \ocal_1} ] \propto I^{\ical_0},
\end{align}
where $I$ is the partial trace of the Choi operator of a CPTP map. Thus, $S_\stgs^{\ical_0 \ical_1 \ocal_1}$ satisfies
\begin{align}
 \Tr_{\ical_1 \ocal_1} [S_\stgs^{\ical_0 \ical_1 \ocal_1} (J_U^T)^{\ical_1 \ocal_1} ] \propto I^{\ical_0}.
\end{align}
Moreover, for any operator $O$ in the linear span of $\mathrm{span}\{ J_U \} := \{ O \mid O = \sum_i c_i J_{U_i}, c_i \in \mathbb{C} \}$, the condition
\begin{align}
 \Tr_{\ical_1 \ocal_1} [S_\stgs^{\ical_0 \ical_1 \ocal_1} (O^T)^{\ical_1 \ocal_1} ] \propto I^{\ical_0}
\end{align}
holds because of the linearity.

Next, we show that $g_j \otimes g_k \in \mathrm{span}\{ J_U \}$ for all $j,k \geq 1$ are in the linear span of $\mathrm{span}\{ J_U \}$.
From Lemma~\ref{lem:span_dim}, the dimension of the linear span is given by $\dim \mathrm{span} \{ J_U \} = (d^2 - 1)^2 + 1 $,
and one basis for this span is given by $g_j \otimes g_k$ with $j,k \geq 0$. Note that $g_0 = I_d$.
On the other hand, $g_i \otimes I$ and $I \otimes g_i$ for $i \geq 1$ are not in $\mathrm{span} \{ J_U \} $,
because of the trace-preserving property and the unitality of unitary operations, respectively.
Thus, the remaining $d^4 - 2(d^2-1) = (d^2-1)^2 + 1$ elements, especially $g_j \otimes g_k$ with $j,k \geq 1$ and $I \otimes I$, are in the linear span of $\mathrm{span}\{ J_U \}$.

Since $g_j \otimes g_k \in \mathrm{span}\{ J_U \}$ for all $j,k \geq 1$,
by substituting $S_\stgs^{\ical_0 \ical_1 \ocal_1}$ with the decomposition Eq.~\eqref{eq:lemma2_general_decomposition},
we obtain $\sum_i \gamma_{ijk} h_i^{\ical_0} \propto I^{\ical_0}$ for all $j,k \geq 1$.
Thus, $\gamma_{ijk} = 0$ is required for all $i,j,k \geq 1$, which proves the Lemma.

\end{proof}

\begin{lem}\label{lem:span_dim}
The dimension of the linear span of $\mathrm{span} \{ J_U \} := \{ O \mid O = \sum_i c_i J_{U_i}, c_i \in \mathbb{C} \} $ is $(d^2-1)^2 + 1$.
\end{lem}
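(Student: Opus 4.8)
The plan is to compute $\dim \mathrm{span}\{J_U\}$ by determining its Hilbert--Schmidt orthogonal complement inside $L(\hcal_{in}\otimes\hcal_{out})$, exploiting that each $J_U$ is the rank-one operator $J_U=\proj{\psi_U}$ with $\ket{\psi_U}:=(I\otimes U)\ket{\Phi}$ and $\ket{\Phi}:=\sum_i\ket{ii}$. First I would record the easy inclusion giving the upper bound: every $J_U$ is the Choi operator of a unitary channel, hence both trace-preserving ($\Tr_{\ocal}J_U=I^{\ical}$) and unital ($\Tr_{\ical}J_U=I^{\ocal}$). Expanding $J_U=\sum_{jk}c_{jk}\,g_j\otimes g_k$ and using $\Tr g_k=d\,\delta_{k0}$, these two conditions force $c_{j0}=0$ and $c_{0k}=0$ for all $j,k\ge1$, so every $J_U$ lies in $W:=\mathrm{span}\{g_j\otimes g_k:j,k\ge1\}\oplus\mathbb{C}\,(I\otimes I)$, a subspace of dimension $(d^2-1)^2+1$. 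This already yields $\dim\mathrm{span}\{J_U\}\le(d^2-1)^2+1$, and it is precisely the structure invoked in the proof of Lemma~\ref{lem:unitary2cptp}.

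The harder direction is the matching lower bound, which I would obtain by showing that any $X$ orthogonal to every $J_U$ is orthogonal to all of $W$, i.e.\ $\{J_U\}^\perp=W^\perp$. Writing $X=\sum_{jk}x_{jk}\,g_j\otimes g_k$ and using $(I\otimes U^\dagger)(g_j\otimes g_k)(I\otimes U)=g_j\otimes(U^\dagger g_k U)$ together with the identity $\bra{\Phi}A\otimes B\ket{\Phi}=\Tr(AB^T)$, the pairing becomes $\Tr(XJ_U)=\bra{\psi_U}X\ket{\psi_U}=\sum_{jk}x_{jk}\,\Tr\!\big(g_j(U^\dagger g_k U)^T\big)$. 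Expanding the conjugation action in the operator basis as $U^\dagger g_k U=\sum_l R_{kl}(U)\,g_l$, this reduces to
\[
\Tr(XJ_U)=\sum_{kl}\Big(\sum_j G_{jl}\,x_{jk}\Big)R_{kl}(U),
\]
where $G_{jl}:=\Tr(g_j g_l^{T})$ is a fixed \emph{invertible} matrix (conjugate-transposition permutes $\{g_l\}$ up to signs, so $\{g_l^T\}$ is again an orthogonal basis). Here $R(U)$ is block diagonal: $R_{00}\equiv1$, the entries $R_{0l},R_{k0}$ for $k,l\ge1$ are identically zero, and the block $(R_{kl})_{k,l\ge1}=\mathrm{Ad}(U)$ is the matrix of the adjoint action on the traceless operators.

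The crux is then a Peter--Weyl / Schur-orthogonality argument. The adjoint representation $U\mapsto\mathrm{Ad}(U)$ on the $(d^2-1)$-dimensional space of traceless matrices is irreducible for $d\ge2$ and inequivalent to the trivial representation, so the functions $\{1\}\cup\{\mathrm{Ad}(U)_{kl}:k,l\ge1\}$ are linearly independent on $\mathcal{U}(d)$. Hence $\Tr(XJ_U)=0$ for all $U$ forces exactly the coefficient dual to $I\otimes I$ and those dual to $\{g_j\otimes g_k:j,k\ge1\}$ to vanish (the coefficients multiplying the identically-zero entries $R_{0l},R_{k0}$ remain free), which says precisely $X\in W^\perp$; inverting $G$ does not change this count. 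Therefore $\dim\{J_U\}^\perp=d^4-[(d^2-1)^2+1]$ and $\dim\mathrm{span}\{J_U\}=(d^2-1)^2+1$.

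The hard part will be justifying the linear independence of the relevant matrix-entry functions, which is where I would place the weight of the proof: it rests on (i) irreducibility of the adjoint representation of $\mathcal{U}(d)$ on traceless matrices, and (ii) Schur orthogonality for compact groups, ensuring that matrix entries of inequivalent irreducibles (the adjoint and the trivial) are linearly independent as functions on the group. A minor bookkeeping point is confirming that $G$ is invertible so the change of variables preserves dimensions. I would finally note that this single orthogonality computation reproduces the upper bound as well, so the lower-bound argument is in fact self-contained and the separate trace-preserving/unital discussion serves mainly as intuition and as the link to Lemma~\ref{lem:unitary2cptp}.
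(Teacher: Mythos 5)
Your argument is correct, but it reaches the count by a different route than the paper. The paper vectorizes $J_U$ and computes the rank of the Haar average $Q=\int dU\,(\pproj{U^*}\otimes\pproj{U})$: invariance under $U\mapsto VU$ places $Q$ in the commutant of $V^*\otimes V$ on the first pair of tensor factors, the decomposition of $V^*\otimes V$ into the trivial and the $(d^2-1)$-dimensional adjoint irreps plus Schur's lemma forces $Q=\sum_k P_k\otimes Q_k$, and an explicit evaluation gives $Q=\frac{1}{d^2-1}P_1\otimes P_1+P_2\otimes P_2$, whose rank is $(d^2-1)^2+1$. You instead characterize the Hilbert--Schmidt annihilator of $\{J_U\}$ directly, using the block structure of $R(U)$ and the linear independence of matrix coefficients of inequivalent irreducibles (Peter--Weyl / Schur orthogonality) to show it coincides with the complement of $W=\mathbb{C}(I\otimes I)\oplus\mathrm{span}\{g_j\otimes g_k\}_{j,k\ge1}$. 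Both proofs rest on the same representation-theoretic input---irreducibility of the adjoint action on traceless matrices, equivalently the two-block decomposition of $U^*\otimes U$---but yours buys something extra: it identifies $\mathrm{span}\{J_U\}$ as exactly $W$, which is precisely the fact the paper must re-derive inside the proof of Lemma~\ref{lem:unitary2cptp} by combining Lemma~\ref{lem:span_dim} with the separate trace-preserving/unitality exclusion of $g_i\otimes I$ and $I\otimes g_i$. The paper's version, in exchange, is a short self-contained computation once Schur's lemma is invoked and needs no discussion of which matrix-coefficient functions are independent. Two small points to tidy in your write-up: invertibility of $G_{jl}=\Tr(g_j g_l^T)$ holds for \emph{any} Hermitian orthogonal basis, since $\Tr(g_i^T g_j^T)=\Tr(g_j g_i)=d\,\delta_{ij}$ shows $\{g_l^T\}$ is again orthogonal (the signed-permutation remark is specific to the standard Gell-Mann-type choice); and since $G$ is in fact block diagonal ($G_{0l}=G_{j0}=0$ for $j,l\ge1$ by tracelessness), the passage from the constraints on $y$ back to $x_{00}=0$ and $x_{jk}=0$ for $j,k\ge1$ is immediate, so the annihilator is literally $W^\perp$ and not merely a subspace of the same dimension.
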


\begin{proof}

We first define $\kket{U} := (U \otimes I) \kket{I}$ with $\kket{I} := \sqrt{d} \ket{\phi^+} = \sum_{i=0}^{d} \ket{ii}$ the unnormalized maximally entangled state.
The vectorization of $J_U = \pproj{U} = (U \otimes I) \pproj{I} (U^\dag \otimes I)$ is given by 
$(U^\dag \otimes I)^T \kket{I} \otimes (U \otimes I) \kket{I} = \kket{U^*} \otimes \kket{U}$,
and the dimension of $\mathrm{span} \{ J_U \}$ is equivalent to the dimension of 
$\mathrm{span} \{ \kket{U^*} \otimes \kket{U} \} := \{ O \mid O = \sum_i c_i \kket{U^*} \otimes \kket{U}, c_i \in \mathbb{C} \}$.
In order to obtain the dimension, 
we consider the projector of $\kket{U^*} \otimes \kket{U}$, and integrate over all unitary operations $U$ as 
\begin{align}
Q &= \int dU (\pproj{U^*} \otimes \pproj{U}), \label{eq:ap_def_q_int}
\end{align}
and the dimension is given by the rank of $Q$.
Consider the substitution of $U \to VU$ with arbitrary $V$ and the invariance of the Haar measure, $Q$ satisfies
\begin{align}
Q &= \int dU (V^* \otimes I \otimes V \otimes I) (\pproj{U^*} \otimes \pproj{U}) (V^T \otimes I \otimes V^\dag \otimes I) \\
&= (V^* \otimes I \otimes V \otimes I) Q (V^T \otimes I \otimes V^\dag \otimes I).
\end{align}
For convenience, we denote the space that $V$ and $V^*$ acting on by $A$ and the remaining by $B$,
then $Q$ satisfies the commutation relation
\begin{align}
[ Q , (U^* \otimes U)^{A} \otimes I^{B} ] = 0
\end{align}
for all unitary operators $U$.
The irreducible representation of $(U^* \otimes U)$ is given by
\begin{align}
U^* \otimes U = U_1 \oplus U_2,
\end{align}
where the corresponding dimensions are given by $d_1 = d^2 -1 $ and $d_2=1$,
and the projectors onto the corresponding subspaces are $P_1 := I - \phi^+$ and $P_2 := \phi^+$.
From Schur's lemma, $Q$ can be decomposed as 
\begin{align}
Q = \sum_{k=1}^2 P_k^A \otimes Q_k^B,
\end{align}
and since $P_k^A$ are projectors, $Q$ is evaluated as
\begin{align}
Q &=\sum_{k=1}^2 \frac{P_k^A}{d_k} \otimes \mathrm{Tr}_{A} [ (P_k^A \otimes I^B) Q ] \\
&=\sum_{k=1}^2 \frac{P_k^A}{d_k} \otimes \mathrm{Tr}_{A} [ (P_k^A \otimes I^B) \pproj{Q'}^{AB} ],
\end{align}
where $\pproj{Q'}^{AB}$ is an arbitrary maximally entangled state between $A$ and $B$.
The second equality holds because of the partial trace on $A$.
Let the maximally entangled state $\kket{Q'}^{AB}$ be
\begin{align}
\kket{Q'}^{AB} = \sum_{l=1}^2 \sum_{\alpha=0}^{d_l-1} \ket{l,\alpha}^A \otimes \ket{l,\alpha}^B
\end{align}
where $l=1,2$ are the label for the irreducible representations and $\alpha$ for the basis in $P_l$.
Note that there is no multiplicity subspace in this case.
Then
\begin{gather}
(P_k^A \otimes I^B) \kket{Q'}^{AB} = \sum_{\alpha=0}^{d_k-1} \ket{k,\alpha}^{A} \otimes \ket{k,\alpha}^{B} , \\
\mathrm{Tr}_{A} [ (P_k^A \otimes I^B) \pproj{Q'}^{AB} ] = P_k^{B},
\end{gather}
and thus $Q$ can be written as 
\begin{align}
Q = \sum_{k=1}^2 \frac{1}{d_k} P_k^{A} \otimes P_k^{B} = \frac{1}{d^2-1} P_1^A \otimes P_1^B + P_2^A \otimes P_2^B.
\end{align}
The rank of $Q$ is $(d^2 - 1)^2 + 1$, and thus the dimension of $\mathrm{span} \{ J_U \}$ is $(d^2-1)^2 + 1$.

\end{proof}

By considering Lemma~\ref{lem:neutralization} and Lemma~\ref{lem:unitary2cptp},
it is enough to prove Theorem~\ref{thm:main2} in order to prove Theorem~\ref{thm:main}.

\begin{thm}\label{thm:main2}
Given a one-slot probabilistic comb $S_\stgs^{\ical_0 \ical_1 \ocal_1 \ocal_0}$ with $\dim \ical_1 = \dim \ocal_1 = d$
and $\dim \ical_0 = \dim \ocal_0 = d_0$.
If $S_\stgs^{\ical_0 \ical_1 \ocal_1} := \Tr_{\ocal_0} S_\stgs^{\ical_0 \ical_1 \ocal_1 \ocal_0}$ has a decomposition satisfying 
\begin{align}
S_\stgs^{\ical_0 \ical_1 \ocal_1}  =\frac{I^{\ical_0} }{ d_0 } \otimes \Tr_{\ical_0} S_\stgs^{\ical_0 \ical_1 \ocal_1} 
+ \sum_{i=1}^{d_0^2-1} \sum_{j=1}^{d^2-1} \alpha_{ij} h_i^{\ical_0} \otimes [ g_j^{\ical_1} \otimes I^{\ocal_1} ] 
+ \sum_{i=1}^{d_0^2-1} \sum_{j=1}^{d^2-1} \beta_{ij} h_i^{\ical_0} \otimes [ I^{\ical_1} \otimes g_j^{\ocal_1} ]
\end{align}
with coefficients $\{ \alpha_{ij} \}$ and  $\{ \beta_{ij} \}$,
then there exists $\varepsilon > 0$ and a $d$-slot comb $C = S + N$ satisfying
\begin{gather}
\Tr_{\ical \ocal} [S (J_U^{\otimes d})^T ] = \varepsilon \Tr_{\ical_1 \ocal_1} [ S_\stgs J_U^T ] \label{eq:condition_for_S_S1} \\
\Tr_{\ical \ocal} ( \psym N \psym ) \propto J_{id} \label{eq:condition_for_N}.
\end{gather}
\end{thm}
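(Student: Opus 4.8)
The plan is to construct $S$ and $N$ separately and then verify that $C=S+N$ closes up into a deterministic comb. First I would build the success part by running $S_\stgs$ on a single one of the $d$ slots and, on each of the remaining $d-1$ slots, feeding in the maximally mixed state $I/d$ and discarding the output, the whole thing rescaled by a small $\varepsilon>0$. Concretely, $S=\varepsilon\, S_\stgs^{\ical_0\ical_1\ocal_1\ocal_0}\otimes\bigl[\bigotimes_{k=2}^{d}(I^{\ical_k}/d\otimes I^{\ocal_k})\bigr]$. Because trace preservation gives $\Tr_{\ocal_k}J_U=I^{\ical_k}$, each unused slot contributes the factor $\Tr_{\ical_k\ocal_k}[(I^{\ical_k}/d\otimes I^{\ocal_k})(J_U)^T]=1$, independent of $U$, so the action on $J_U^{\otimes d}$ collapses to $\varepsilon\,\Tr_{\ical_1\ocal_1}[S_\stgs J_U^T]$, which is exactly Eq.~\eqref{eq:condition_for_S_S1}. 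To complete $S$ to a comb I would fix a reference deterministic comb $C_0=I^{\ical_0}\otimes\bigl[\bigotimes_{k=1}^{d}(I^{\ical_k}/d\otimes I^{\ocal_k})\bigr]\otimes(I^{\ocal_0}/d_0)$ and set the provisional failure operator $F:=C_0-S$. For $\varepsilon$ small enough $F\geq0$, and by construction $S+F=C_0$ satisfies positivity, the causal conditions Eqs.~\eqref{eq:causal_condition_c_1}--\eqref{eq:causal_condition_c_3}, and normalization. The single missing property is neutralization: $F$ need not be proportional to $J_{id}$.

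The substance of the proof is to replace $F$ by an operator $N$ that completes $S$ to a comb in the same way but additionally neutralizes. I would do this in two stages, mirroring the sketch. In the first stage I work at the level of the marginal $\nio=\Tr_{\ocal_0}N$, constructing it from $\Tr_{\ocal_0}F$. The decomposition of Lemma~\ref{lem:unitary2cptp}, which $F$ inherits, is the key tool: the causal conditions are built entirely from partial traces, so adding correction terms of the form $h_i^{\ical_0}\otimes g_j^{\ical_k}\otimes I$ and $h_i^{\ical_0}\otimes I\otimes g_j^{\ocal_k}$ with traceless $g_j$ leaves every causal marginal in Eqs.~\eqref{eq:causal_condition_c_1}--\eqref{eq:causal_condition_c_3} untouched. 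This yields a whole family of operators that still complete $S$ to a comb, within which I then invoke Lemma~\ref{lem:neutralization}: since I only need $\Tr_{\ical\ocal}(\psym N\psym)\propto J_{id}$, conjugation by the symmetric-subspace projector $\psym$ annihilates precisely the non-symmetric traceless components, and I can tune the remaining freedom so that the symmetrized output is proportional to $J_{id}$. Choosing a sufficiently positive, full-rank base operator keeps $\nio\geq0$ throughout.

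In the second stage I lift $\nio$ back to $N$. The causal and neutralization conditions transfer with little effort, since they concern the marginals and the symmetrized output already fixed by $\nio$. The genuinely hard step --- the one I would isolate as a separate lemma --- is positivity. Unlike in the success part, the target output here is the identity channel, whose Choi operator $J_{id}=d\,\proj{\phi^+}$ is rank one, so $N$ cannot be taken full-rank on the whole space $\hcal_{\ical_0}\otimes\hcal_\ical\otimes\hcal_\ocal\otimes\hcal_{\ocal_0}$; a naive full-rank choice would violate neutralization. I would resolve this by identifying the subspace compatible with the rank-one output on $\ical_0\ocal_0$, showing it is preserved by all the constraints, and then taking $N$ to be strictly positive and full-rank \emph{within that subspace}.

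The main obstacle throughout is reconciling the rigid partial-trace (causal) equations with the rank constraint imposed by neutralization, which pull in opposite directions. The traceless-term freedom of Lemma~\ref{lem:unitary2cptp}, which leaves the causal marginals invariant, together with the symmetric-subspace projection of Lemma~\ref{lem:neutralization}, which removes exactly the components that would spoil neutralization, are precisely what make these two demands simultaneously satisfiable; the rank-one positivity issue in the final lift is then handled by working inside the appropriate subspace.
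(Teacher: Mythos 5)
Your overall architecture matches the paper's: the same choice of $S$ (run $S_\stgs$ on the first slot and trace the remaining $d-1$ slots against $I/d$), the same provisional failure operator obtained by completing $S$ to a trivial deterministic comb, the same two-stage construction of $N$ via its marginal $\nio=\Tr_{\ocal_0}N$, and the same diagnosis that positivity of the final lift must be handled on a subspace because $J_{id}$ is rank one. However, there is a genuine gap at the heart of your first stage. You assert that conjugation by $\psym$ ``annihilates precisely the non-symmetric traceless components,'' so that the remaining freedom can be tuned to make the symmetrized output proportional to $J_{id}$. This fails for the $\beta$-terms $h_i^{\ical_0}\otimes I^{\ical_1}\otimes g_j^{\ocal_1}\otimes\cdots$ inherited from Lemma~\ref{lem:unitary2cptp}: symmetrizing $g_j^{\ocal_1}\otimes I^{\ocal_2}\otimes\cdots\otimes I^{\ocal_d}$ over permutations yields a nonzero traceless operator, not zero, and the obvious cancellation (adding the same term with $g_j$ moved to $\ocal_2$ with opposite sign, which is exactly how the paper kills the $\alpha$-terms on the $\ical$ side) is forbidden here, because a bare $I^{\ical_2}\otimes g_j^{\ocal_2}$ factor violates the causal marginal conditions --- the comb constraints are asymmetric between input and output wires. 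So ``tune the remaining freedom'' does not go through as stated; the two demands you correctly identify as pulling in opposite directions are not reconciled by traceless-term freedom alone.

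The paper's resolution is the one substantive idea your proposal is missing: for each $\beta$-term one adds causally admissible correction terms (each carrying a traceless factor on its own slot input $\ical_k$, so every causal marginal is untouched) chosen so that the total $\ical$-part becomes the unnormalized totally antisymmetric projector $A_d$ on $d$ qudits of dimension $d$. Since $P_\sigma^{\ical}\ket{A_d}=sgn(\sigma)\ket{A_d}$, one gets $\psym(A_d^{\ical}\otimes M_j^{\ocal})\psym=A_d^{\ical}\otimes A_d^{\ocal}M_j^{\ocal}A_d^{\ocal}$, and the latter vanishes because it is at most rank one with trace proportional to $\Tr g_j=0$. This is also exactly where the hypothesis of $d$ copies enters: the antisymmetric subspace of $(\mathbb{C}^d)^{\otimes d}$ is one-dimensional and its single-site marginal is the identity, which is what makes the required coefficients compatible with the causal constraint that terms containing only one traceless factor must be absent. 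Without this (or an equivalent) mechanism your first stage does not close, and nothing in your argument explains why $d$ copies, rather than two, should suffice.
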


The proof of Theorem~\ref{thm:main2} contains two parts: 
the first part presents the construction of $\nio := \Tr_{\ocal_0} N$;
and the second part presents the construction of $N$ from $\nio$ by applying Lemma~\ref{lem:nio2n}.

\begin{proof}[Proof of Theorem~\ref{thm:main2}]

Let the Choi operator $S$ corresponds to success be
\begin{align}
S := \varepsilon S_\stgs^{\ical_0 \ical_{1} \ocal_{1} \ocal_0} \otimes \frac{ I^{\ical_2 \ocal_2} }{d} \otimes 
\cdots \otimes \frac{ I^{\ical_{d} \ocal_{d}} }{d}. \label{eq:ap_thm_proof_def_s}
\end{align}
Then the condition $S \geq0$ and Eq.~\eqref{eq:condition_for_S_S1} is satisfied.
The remaining conditions can be classified into the positivity condition $N \geq 0$, 
the causal condition that $C = S + N$ is a deterministic comb, and the neutralization condition Eq.~\eqref{eq:condition_for_N}.

We first show the idea to construct $N$ satisfying the causal condition.
One candidate of the Choi operator corresponding to failure, i.e., a Choi operator satisfies the causal condition that $C = S+F$ is a deterministic comb, is given by
\begin{align}
F := F^{\ical_0 \ical_{1} \ocal_{1}} \otimes \frac{ I^{\ical_2 \ocal_2} }{d} \otimes 
\cdots \otimes \frac{ I^{\ical_{d} \ocal_{d}} }{d} \otimes \frac{I^{\ocal_0}}{d}
\end{align}
where 
\begin{align}
F^{\ical_0 \ical_{1} \ocal_{1}} &:= \frac{I^{\ical_0 \ical_{1} \ocal_{1}}}{d} - \varepsilon S_\stgs^{\ical_0 \ical_{1} \ocal_{1}} \\
&= \frac{I^{\ical_0 \ical_{1} \ocal_{1}}}{d} - \varepsilon \biggl\{  \frac{I^{\ical_0} }{ d_0 } \otimes \Tr_{\ical_0} S_\stgs^{\ical_0 \ical_1 \ocal_1} 
+ \sum_{i=1}^{d_0^2-1} \sum_{j=1}^{d^2-1} \alpha_{ij} h_i^{\ical_0} \otimes [ g_j^{\ical_1} \otimes I^{\ocal_1} ]
+ \sum_{i=1}^{d_0^2-1} \sum_{j=1}^{d^2-1} \beta_{ij} h_i^{\ical_0} \otimes [ I^{\ical_1} \otimes g_j^{\ocal_1} ] \biggr\}.
\end{align}

This $F$ summed up with $S$ satisfies the causal condition by construction, 
but it does not satisfy the neutralization condition Eq.~\eqref{eq:condition_for_N}. 
Thus, it is enough to construct $N \geq 0$ that satisfies the following conditions
\begin{align}
\Tr_{\ocal_0} N - N^{(d)} \otimes \frac{I^{\ocal_{d}}}{d} &= 0 \label{eq:causal_condition_n_1} \\
\Tr_{\ical_k} N^{(k)} - N^{(k-1)} \otimes \frac{I^{\ocal_{k-1}}}{d} &= 0 \quad  (3 \leq k \leq d) \label{eq:causal_condition_n_2} \\
\Tr_{\ical_2} N^{(2)} - N^{(1)} \otimes \frac{I^{\ocal_1}}{d} &= d^{d-1} (F^{\ical_0 \ical_{1} \ocal_{1}}  - F^{\ical_0 \ical_{1}} \otimes \frac{I^{\ocal_1}}{d}) \label{eq:causal_condition_n_3} \\
\Tr_{\ical_1} N^{(1)} - (\Tr N ) \frac{I^{\ical_0}}{d_0}  &= 0 \label{eq:causal_condition_n_4} \\
\psym N \psym &= \frac{J_{id}^{\ical_0 \ocal_0}}{d_0} \otimes \Tr_{\ical_0 \ocal_0} [ \psym N \psym ], \label{eq:nsym_equals_jid}
\end{align}
where $N^{(d)} := \Tr_{\ocal_d \ocal_0} N$ and $N^{(k-1)} := \Tr_{\ocal_{k-1} \ical_k} N^{(k)}$ for $k = 2, \ldots, d$.

We divide the proof into two parts, by introducing the operator $\nio := \Tr_{\ocal_{0}} N$.
In the first part of the proof, we show the existence of $\nio$,
and the neutralization condition Eq.~\eqref{eq:nsym_equals_jid} is replaced by 
\begin{align}
\psym \nio \psym = \frac{I^{\ical_0}}{d_0} \otimes \Tr_{\ical_0} [ \psym \nio \psym ]. \label{eq:condition_for_nio}
\end{align}
In the second part of the proof (Lemma~\ref{lem:nio2n}), we construct the desired $N$ from $\nio$.
In both constructions, the following three conditions are considered: the positivity, the causal condition, and the neutralization condition.

(First part: construction of $\nio$)
Let $\nio$ be 
\begin{align}
\nio 
&:= \frac{1}{d} I^{\ical_0 \ical_{1} \ocal_{1}} \otimes \frac{ I^{\ical_2 \ocal_2} }{d} \otimes \cdots \otimes \frac{ I^{\ical_d \ocal_d} }{d} \notag \\
&- \varepsilon \biggl\{  \frac{I^{\ical_0} }{ d_0 } \otimes \Tr_{\ical_0} S_\stgs^{\ical_0 \ical_1 \ocal_1}  \otimes \frac{ I^{\ical_2 \ocal_2} }{d} \otimes \cdots \otimes \frac{ I^{\ical_d \ocal_d} }{d} \notag \\
&+ \sum_{i,j \geq 1} \alpha_{ij} h_i^{\ical_0} \otimes [ g_j^{\ical_1} \otimes I^{\ocal_1} ] \otimes \frac{ I^{\ical_2 \ocal_2} }{d} \otimes \cdots \otimes \frac{ I^{\ical_d \ocal_d} }{d} \notag \\
&+ \sum_{i,j \geq 1} (-\alpha_{ij}) h_i^{\ical_0} \otimes \frac{ I^{\ical_1 \ocal_1} }{d} \otimes [ g_j^{\ical_2} \otimes I^{\ocal_2} ] \otimes \cdots \otimes \frac{ I^{\ical_d \ocal_d} }{d} \notag \\
&+ \sum_{i,j \geq 1} \beta_{ij} h_i^{\ical_0} \otimes [ I^{\ical_1} \otimes g_j^{\ocal_1} ] \otimes \frac{ I^{\ical_2 \ocal_2} }{d} \otimes \cdots \otimes \frac{ I^{\ical_d \ocal_d} }{d} \notag \\
&+ \sum_{i,j \geq 1, \vec{k_2}} \beta_{ij} a_{2,\vec{k_2}} h_i^{\ical_0} \otimes [ g_{k_{2,1}}^{\ical_1} \otimes g_j^{\ocal_1} ] \otimes [ g_{k_{2,2}}^{\ical_2} \otimes \frac{I^{\ocal_2}}{d}] \otimes \frac{ I^{\ical_3 \ocal_3} }{d} \cdots \otimes \frac{ I^{\ical_d \ocal_d} }{d} \notag \\
& + \cdots + \notag \\
&+ \sum_{i,j \geq 1, \vec{k_d}} \beta_{ij} a_{d, \vec{k_d}} h_i^{\ical_0} \otimes [ g_{k_{d,1}}^{\ical_1} \otimes g_j^{\ocal_1} ] \otimes [ g_{k_{d,2}}^{\ical_2} \otimes \frac{I^{\ocal_2}}{d} ] \otimes \cdots \otimes [ g_{k_{d,d}}^{\ical_d} \otimes \frac{I^{\ocal_d}}{d}] \biggr\}, \label{eq:construction_nio}
\end{align}
where the summation on $\vec{k_m} = ( k_{m,1}, k_{m,2}, \ldots , k_{m,m} )$ denotes the summation on $ \{ k_{i,j} = 0,\ldots, d^2 -1\}$ for each term,
and coefficients $a_{m,\vec{k_m}}$ are determined in the following.

(Positivity)
The positivity of $\nio$ is trivial for small enough $\varepsilon$.
That is, since $\nio$ is of the form $\nio = I/ d^d + \varepsilon N'$ where $N'$ does not depend on $\varepsilon$,
there exists $\varepsilon > 0$ such that $\nio$ is strictly positive.

(Causal condition) 
Here we show that the causal conditions Eqs.~\eqref{eq:causal_condition_n_1}-\eqref{eq:causal_condition_n_4} are satisfied.
We first remark that the 1st, 2nd, 3rd and 5th lines sum up to $F$,
and we can write $\nio$ as $\nio = F + F'_s + \sum_{m=2}^d F'_m$ where $F'_s$ corresponds to the 4th line, 
and $F'_2, \ldots, F'_d$ correspond to the 6th to the last line.
Then it is enough to show that all $F' \in \{ F'_i \}_{i=s,2,3,\ldots,d}$ satisfies 
\begin{gather}
\Tr_{\ocal_0} F' -  F'^{(d)} \otimes \frac{I^{\ocal_{d}}}{d} = 0 \label{eq:causal_condition_fp_1}\\
\Tr_{\ical_k}  F'^{(k)} -  F'^{(k-1)} \otimes \frac{I^{\ocal_{k-1}}}{d} = 0 \quad  (2 \leq k \leq d) \label{eq:causal_condition_fp_2}\\
\Tr_{\ical_1}  F'^{(1)} - (\Tr  F' ) \frac{I^{\ical_0}}{d_0}  = 0 \label{eq:causal_condition_fp_3},
\end{gather}
where $F'^{(d)} := \Tr_{\ocal_d \ocal_0} F'$ and $F'^{(k-1)} := \Tr_{\ocal_{k-1} \ical_k} F'^{(k)}$ for $k = 2, \ldots, d$.

It is trivial that Eq.~\eqref{eq:causal_condition_fp_1} is satisfied for all $F'$.
It is also trivial to see that Eqs.~\eqref{eq:causal_condition_fp_2},\eqref{eq:causal_condition_fp_3} are satisfied for $F'_a$.
Thus, we consider Eqs.~\eqref{eq:causal_condition_fp_2},\eqref{eq:causal_condition_fp_3} for $F'_2, \ldots, F'_d$.
We can see that the l.h.s. of these equations are always of the form 
$\Tr_{\ical_k \cdots} ( F' - \Tr_{\ocal_{k-1}} F' \otimes \frac{I^{\ical_{k-1}}}{d}) $,
and $F'_m$ satisfies these conditions when $m < k$ because $F'_m$ has the term $\frac{I^{\ical_{k}}}{d}$ already.
In order to satisfy these conditions for $m \geq k$, we assume that the coefficients $a_{m,\vec{k_m}}$ satisfy 
\begin{align}
a_{m,\vec{k_m}} := a_{m,k_{m,1},k_{m,2},\ldots,k_{m,m}} = 0 \quad \mathrm{for} \quad k_{m,m} = 0, \label{eq:causal_requirement_for_a}
\end{align}
which is compatible with the following arguments on the neutralization condition.
By choosing these coefficients, $\Tr_{\ical_k \cdots} F'_m = 0$ is satisfied and Eqs.~\eqref{eq:causal_condition_fp_2},\eqref{eq:causal_condition_fp_3} is also satisfied.

(Neutralization condition)
Now we present a construction of coefficients $a_{m,\vec{k_m}}$ such that Eq.~\eqref{eq:condition_for_nio} is satisfied.
This condition is satisfied independently for the 1st line, 2nd line, the sum of 3rd and 4th lines, and the sum of the rest.
First, it is trivial that the 1st line and the 2nd line satisfies the condition, as it has $I^{\ical_0}$ in the system $\ical_0$. 
The sum of the 3rd and 4th lines vanishes on $\psym$, i.e., satisfies the condition with the r.h.s. being 0, 
because $\psym P_{\sigma} M P_{\sigma} \psym  = \psym M \psym$ holds for any permutation $\sigma$ and an arbitrary operator $M$.
For the sum of 5th line and after, we see that for each $i,j \geq 1$, it can be written as $\beta_{ij} h_i^{\ical_0} \otimes C_j$ with
\begin{align}
C_j &:=  [ I^{\ical_1} \otimes g_j^{\ocal_1} ] \otimes \frac{ I^{\ical_2 \ocal_2} }{d} \otimes \cdots \otimes \frac{ I^{\ical_d \ocal_d} }{d} \notag \\
&+ \sum_{\vec{k_2}}  a_{2,\vec{k_2}} [ g_{k_{2,1}}^{\ical_1} \otimes g_j^{\ocal_1} ] \otimes [ g_{k_{2,2}}^{\ical_2} \otimes \frac{I^{\ocal_2}}{d}] \otimes \frac{ I^{\ical_3 \ocal_3} }{d} \cdots \otimes \frac{ I^{\ical_d \ocal_d} }{d} \notag \\
& + \cdots + \notag \\
&+ \sum_{\vec{k_d}} a_{d, \vec{k_d}} [ g_{k_{d,1}}^{\ical_1} \otimes g_j^{\ocal_1} ] \otimes [ g_{k_{d,2}}^{\ical_2} \otimes \frac{I^{\ocal_2}}{d} ] \otimes \cdots \otimes [ g_{k_{d,d}}^{\ical_d} \otimes \frac{I^{\ocal_d}}{d}] \\
&= [ I^{\ical_1} \otimes I^{\ical_2} \otimes \cdots \otimes I^{\ical_d} 
+ \sum_{\vec{k_2}} a_{2, \vec{k_2}}  g_{k_{2,1}}^{\ical_1} \otimes g_{k_{2,2}}^{\ical_2} \otimes I^{\ical_3} \otimes \cdots \otimes I^{\ical_d} \notag \\
& \quad + \cdots 
+ \sum_{\vec{k_d}} a_{d, \vec{k_d}} g_{k_{d,1}}^{\ical_1} \otimes g_{k_{d,2}}^{\ical_2} \otimes g_{k_{d,3}}^{\ical_3} \otimes \cdots \otimes g_{k_{d,d}}^{\ical_d} ] \notag \\
& \otimes [ g_j^{\ocal_1} \otimes \frac{I^{\ocal_2}}{d} \otimes \cdots \otimes \frac{I^{\ocal_d}}{d} ]. \label{eq:cj_iii_otimes_gii}
\end{align}
In the following, we show that the neutralization condition is satisfied for each $i,j$, by showing that $C_j$ vanishes on $\psym$ as $\psym C_j \psym = 0$.

Here, we choose the coefficients $\{a_{m,\vec{k_m}} \}$ such that the first term is the $d$ qudit (unnormalized) totally antisymmetric state $d^d A_d = d^d \proj{A_d}$. 
These coefficients are available as follows.
Note that we assumed Eq.~\eqref{eq:causal_requirement_for_a} in the causal condition part.
Since $g_{i_1} \otimes g_{i_2} \otimes \cdots g_{i_d}$ forms a basis, any operator including $A_d$ can be written as
$\sum_{i_1, i_2, \ldots, i_d} a_{i_1, i_2, \ldots, i_d} g_{i_1} \otimes g_{i_2} \otimes \cdots \otimes g_{i_d}$.
However, the coefficients $\{a_{m,\vec{k_m}} \}$ has the constraint Eq.~\eqref{eq:causal_requirement_for_a} and cannot cover arbitrary operator.
Especially, it lacks the terms $g_{i_1} \otimes I \otimes \cdots \otimes I$ with $i_1 \neq 0$.
The totally antisymmetric state satisfies $\Tr_{2,\ldots,d} A_d = I_1$, 
and the coefficients corresponding to these terms that containing only one traceless operator $g_{i_1}$ are actually 0.
Thus, there exists a set of $\{a_{m,\vec{k_m}} \}$ satisfying Eq.~\eqref{eq:causal_requirement_for_a} 
and that Eq.~\eqref{eq:cj_iii_otimes_gii} can be evaluated as 
\begin{align}
C_j = d^d A_d^{\ical} \otimes [ g_j^{\ocal_1} \otimes \frac{I^{\ocal_2}}{d} \otimes \cdots \otimes \frac{I^{\ocal_d}}{d} ]
=: d A_d^\ical \otimes M_j^{\ocal}.
\end{align}
Now we show that $C_j$ vanishes on $\psym$.
Consider that $\psym = \sum_\sigma P^{\ical \ocal}_\sigma = \sum_\sigma P^{\ical}_\sigma \otimes P^{\ocal}_\sigma$
and $P^{\ical}_\sigma \ket{A_d} = sgn(\sigma) \ket{A_d}$, $\psym (A_d^\ical \otimes M_j^\ocal) \psym$ can be evaluated as 
\begin{align}
\psym (A_d^\ical \otimes M_j^\ocal) \psym 
&= A_d^\ical \otimes  [ \sum_{\sigma} sgn(\sigma)  P^{\ocal}_\sigma ] M_j^\ocal [ \sum_{\sigma'} sgn(\sigma') P^{\ocal}_{\sigma'} ] \\
&= A_d^\ical \otimes A_d^\ocal M_j^\ocal A_d^\ocal 
\end{align}
Also,
\begin{align}
\Tr\, A_d^\ocal M_j^\ocal A_d^\ocal
&= \bra{A_d}  g_j^{\ocal_1} \otimes I^{\ocal_2} \otimes \cdots \otimes I^{\ocal_d} \ket{A_d} \\
&=\Tr\, g_j^{\ocal_1} = 0
\end{align}
because $g_j^{\ocal_1}$ are traceless for $j\geq 1$. 
Thus, we obtain $A_d^\ocal M_j^\ocal A_d^\ocal = 0$ and $\psym C_j \psym = 0$ for $j \geq 1$.

(Second part: construction of $N$ from $\nio$)
We apply Lemma~\ref{lem:nio2n}.
The operator $d^d \nio = I + \varepsilon N'$ corresponds to $M^{AB} = I + \varepsilon M'$, $d^{d+1} N$ corresponds to $M^{ABC}$,
and systems $A,B,C$ correspond to $\ical_0, \ical\otimes\ocal, \ocal_0$ respectively.

\end{proof}

\begin{lem}\label{lem:nio2n}
Let $\hcal_A, \hcal_B, \hcal_C \simeq \hcal_A$ be Hilbert spaces with dimensions $d_0, d_B, d_0$, 
$\Pi^B$ be a projector on $L(\hcal_B)$, and $J_{id}^{AC} = d_0 \phi^+$ be the maximally entangled state on $\hcal_A \otimes \hcal_C$.
Given an operator $M' \in L(\hcal_A \otimes \hcal_B)$,
there exists $\varepsilon > 0$ such that the following holds.
If $M^{AB} = I + \varepsilon M'$ satisfies
\begin{gather}
M^{AB} \geq 0 \\
\Pi^B M^{AB} \Pi^B = \frac{I^A}{d_0} \otimes \Tr_{A} \Pi^B M^{AB} \Pi^B, \label{eq:lem_cond_id}
\end{gather}
there exists an operator $M^{ABC} \in L(\hcal_A \otimes \hcal_B \otimes \hcal_C)$ satisfies
\begin{gather}
M^{ABC} \geq 0 \\
\Tr_C M^{ABC} = M^{AB} \label{eq:lem_def_nio} \\
\Pi^B M^{ABC} \Pi^B = \frac{1}{d_0} J_{id}^{AC} \otimes \Tr_{AC} \Pi^B M^{ABC} \Pi^B. \label{eq:lem_neutralization_cond}
\end{gather}
\end{lem}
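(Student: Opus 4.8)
The plan is to construct $M^{ABC}$ by confining its support to a carefully chosen subspace of $\hcal_A\otimes\hcal_B\otimes\hcal_C$ on which the neutralization condition Eq.~\eqref{eq:lem_neutralization_cond} becomes automatic, and then to secure positivity by a perturbative argument around $\varepsilon=0$. First I would split $\hcal_B$ according to the projector as $\hcal_{B_s}:=\Pi^B\hcal_B$ and $\hcal_{B_f}:=(I^B-\Pi^B)\hcal_B$, and introduce
\[
\mathcal{W} := \big(\ket{\phi^+}^{AC}\otimes\hcal_{B_s}\big)\oplus\big(\hcal_A\otimes\hcal_C\otimes\hcal_{B_f}\big),
\]
where $\ket{\phi^+}^{AC}$ is the maximally entangled vector with $\proj{\phi^+}^{AC}=\tfrac{1}{d_0}J_{id}^{AC}$. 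The key observation is that for any $X\geq0$ with range inside $\mathcal{W}$, the compression $\Pi^B X\Pi^B$ has range inside $\ket{\phi^+}^{AC}\otimes\hcal_{B_s}$; since this is one-dimensional on the $AC$ factor, positivity forces $\Pi^B X\Pi^B=\proj{\phi^+}^{AC}\otimes\tau$ for some $\tau\geq0$ on $\hcal_{B_s}$, which is exactly Eq.~\eqref{eq:lem_neutralization_cond} with $\tau=\Tr_{AC}\Pi^B X\Pi^B$. Thus it suffices to produce $M^{ABC}\geq0$ supported on $\mathcal{W}$ with $\Tr_C M^{ABC}=M^{AB}$.

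Next I would study the linear map $\Phi(X):=\Tr_C X$ on Hermitian operators supported on $\mathcal{W}$. Writing operators in block form with respect to $\hcal_{B_s}\oplus\hcal_{B_f}$, the $(s,s)$ block of any $X$ on $\mathcal{W}$ equals $\proj{\phi^+}^{AC}\otimes\tilde X_{ss}$, so the $(s,s)$ block of $\Phi(X)$ is $\tfrac{I^A}{d_0}\otimes\tilde X_{ss}$, i.e.\ maximally mixed on $A$; conversely any Hermitian operator whose $(s,s)$ block is maximally mixed on $A$ lies in the image of $\Phi$ (build the diagonal blocks directly and fix the off-diagonal block by requiring its $AC$ output to equal $\ket{\phi^+}^{AC}$, which determines it uniquely from $\Tr_C$). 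The hypothesis Eq.~\eqref{eq:lem_cond_id} states precisely that the $(s,s)$ block of $M^{AB}$ is $\tfrac{I^A}{d_0}\otimes\sigma_s$ with $\sigma_s=\Tr_A\Pi^B M^{AB}\Pi^B$, so $M^{AB}$, and likewise $I^{AB}$ and $M'$, all lie in the image of $\Phi$.

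I would then exhibit a distinguished, manifestly positive preimage of the identity,
\[
X_0 := \proj{\phi^+}^{AC}\otimes\big(d_0\,I^{\hcal_{B_s}}\big)\;+\;\big(I^A\otimes I^{\hcal_{B_f}}\big)\otimes\tfrac{I^C}{d_0},
\]
and check that $\Tr_C X_0=I^{AB}$ while $X_0$ is block-diagonal and full rank on each sector of $\mathcal{W}$, hence strictly positive as an operator on $\mathcal{W}$ and bounded below by some $\lambda_0>0$ there. Choosing a fixed Hermitian preimage $\Delta$ of $M'$ supported on $\mathcal{W}$ (which exists by the image characterization above) and setting $M^{ABC}:=X_0+\varepsilon\Delta$, one gets $\Tr_C M^{ABC}=I^{AB}+\varepsilon M'=M^{AB}$ with support in $\mathcal{W}$, and for any $\varepsilon<\lambda_0/\|\Delta\|$ the operator $M^{ABC}$ is positive; this completes the construction.

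The main obstacle is the first conceptual step. A naive extension that merely replaces $\tfrac{I^A}{d_0}$ by $\proj{\phi^+}^{AC}$ on the symmetric sector fails positivity, because the off-diagonal $\Pi^B$–$\Pi^B_\perp$ blocks of $M^{AB}$ cannot be reconciled with the forced rank-one $AC$ structure on that sector. Confining $M^{ABC}$ to $\mathcal{W}$ is what removes this tension: it makes neutralization automatic and, crucially, still admits a strictly-positive-within-$\mathcal{W}$ preimage of the identity, so that the $O(\varepsilon)$ perturbation $\Delta$ cannot destroy positivity. The technical heart is therefore verifying both that $I^{AB}$ (equivalently $X_0$) is full rank inside $\mathcal{W}$ and that $\Phi$ surjects onto the operators with maximally-mixed $(s,s)$ block, the latter being exactly the content of the hypothesis Eq.~\eqref{eq:lem_cond_id}.
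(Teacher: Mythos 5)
Your proposal is correct and follows essentially the same route as the paper's proof: your subspace $\mathcal{W}$ is exactly the range of the paper's support projector $\psup = (\phi^+)^{AC}\otimes\Pi^B + I^{AC}\otimes\Pi_\perp^B$, your base point $X_0$ is the paper's $M_0^{ABC}$, and your unique off-diagonal completion via $\Tr_C$ is the paper's choice of $A_k=\ketbra{\phi^+}{a_k}$ with the $d_0^2$ linear constraints. The only difference is presentational (support-first with a surjectivity argument for $\Tr_C$ restricted to $\mathcal{W}$, rather than an explicit ansatz with coefficients $\alpha_{ijk}$ solved afterwards), which is arguably a cleaner way to organize the same argument.
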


\begin{proof}
Let $\{ h_i \}$ with $h_0 = I$ be a Hermitian basis for $\hcal_A$ and $\hcal_C$.
Let $M_i^B := \frac{1}{d_0} \Tr_{A} h_i^A M^{AB}$, so that $M^{AB} = \sum_i h_i^A \otimes M^B_{i}$ holds.
Note that with this decomposition, the condition Eq.~\eqref{eq:lem_cond_id} is given by $\Pi^B M^{AB} \Pi^B = {I^A} \otimes \Pi^B M_0^B \Pi^B$ and $\Pi^B M_i^B \Pi^B = 0$ for $i \neq 0$.

For simplicity of the proof, we give a construction of $M^{ABC}$ first as
\begin{align}
M^{ABC} &:=  J_{id}^{AC} \otimes \Pi^B M_0^B \Pi^B
+  \frac{1}{d_0} (I^A \otimes I^C) \otimes \Pi_\perp^{B} M_0^B \Pi_\perp^{B} \notag \\
&+ \frac{1}{d_0} \sum_{i \geq 1} h_i^A \otimes \Pi_\perp^{B} M_i^B \Pi_\perp^{B} \otimes I^C \notag \\
&+ \frac{1}{d_0} \sum_{k \geq 0} ( h_k^A \otimes I^C + \sum_{i \geq 0, j \geq 1} \alpha_{ijk} h_i^A \otimes h_j^C ) \otimes \Pi^{B} M_k^B \Pi_\perp^{B} \notag \\
&+ \frac{1}{d_0} \sum_{k \geq 0} ( h_k^A \otimes I^C + \sum_{i \geq 0, j \geq 1} \alpha^*_{ijk} h_i^A \otimes h_j^C ) \otimes \Pi_\perp^{B} M_k^B \Pi^{B}, \label{eq:construction_mabc}
\end{align}
where $\{ \alpha_{ijk} \}$ are complex numbers determined in the following.
It is easy to see that the causal condition Eq.~\eqref{eq:lem_def_nio} and the neutralization condition Eq.~\eqref{eq:lem_neutralization_cond} are satisfied,
and the remaining condition for $M^{ABC}$ is the positivity.

In order to guarantee the positivity, we first consider the support given by the projector
\begin{align}
\psup = (\phi^+)^{AC} \otimes \Pi^B + I^{AC} \otimes \Pi_\perp^{B}
\end{align}
with the projector $\phi^+ = J_{id} / d_0$, then obtain parameters $\{ \alpha_{ijk} \}$ so that $M^{ABC}$ is on this support, 
and finally show that $M^{ABC}$ is positive with small enough $\varepsilon$.
The condition $\psup M^{ABC} \psup = M^{ABC}$ is satisfied if the following holds
\begin{align}
(\phi^+)^{AC} ( h_k^A \otimes I^C + \sum_{i \geq 0, j \geq 1} \alpha_{ijk} h_i^A \otimes h_j^C ) I^{AC}
= ( h_k^A \otimes I^C + \sum_{i \geq 0, j \geq 1} \alpha_{ijk} h_i^A \otimes h_j^C ),
\end{align}
or equivalently 
\begin{align}
\phi^+ A_k = A_k
\end{align}
with $A_k := h_k^A \otimes I^C + \sum_{i \geq 0, j \geq 1} \alpha_{ijk} h_i^A \otimes h_j^C $.
Since $\{ \alpha_{ijk} \}$ can be any complex numbers, the restrictions for $\{ A_k \}$ are given by
\begin{align}
\Tr (h_{k'} \otimes I) A_{k} = d_0^2 \delta_{k k'}
\end{align}
for all $k,k'$.
In order to satisfy $\phi^+ A_k = A_k$, $A_{k}$ should be decomposed as $ A_{k} = \ketbra{\phi^+}{a_{k}}$,
where $\ket{a_k}$ is an unnormalized vector.
Let $\ket{a_{k}} = \sum_{m,n = 0}^{d_0 - 1} a^{(k)}_{mn} \ket{mn}$,
then the condition for $a^{(k)}_{mn}$ is that 
\begin{gather}
\Tr (h_{k'} \otimes I) A_{k} = \sum_{m,n = 0}^{d_0 - 1} (a^{(k)}_{mn})^* \bra{m} h_{k'} \ket{n} = d_0^2 \delta_{k k'},
\end{gather}
for $k,k' = 0, \ldots, d_0^2 - 1 $. %
Here, the $d_0^2$ parameters $a_{mn}$ can be chosen freely, and there are $d_0^2$ linear (and independent due to the orthogonality of $h_{k'}$) constraints, 
thus, there exists a feasible $a_{mn}$, $A_k$, and $\alpha_{ijk}$.
Thus, $\psup M^{ABC} \psup = M^{ABC}$ holds.

For $M^{AB} = I$, a possible $M^{ABC}$ is given by 
\begin{align}
M^{ABC} &= J_{id}^{AC} \otimes \Pi^B + \frac{1}{d_0} I^{AC} \otimes \Pi_\perp^B =: M_{0}^{ABC}
\end{align}
For $M^{AB} = I + \varepsilon M'$, the corresponding $M^{ABC}$ can be written as
\begin{align}
M^{ABC} &= M_0^{ABC} + \varepsilon M''
\end{align}
where $M''$ is an operator only depends on $M'$, because the construction of $M^{ABC}$ given by Eq.~\eqref{eq:construction_mabc} is linear in $M^{AB}$.
The non-zero minimum eigenvalue is given by
\begin{align}
\min_{\ket{\psi} \in \psup} \bra{\psi} M^{ABC} \ket{\psi} = \min_{\ket{\psi} \in \psup } [ \bra{\psi} M_0^{ABC} \ket{\psi} + \varepsilon \bra{\psi}  M'' \ket{\psi} ],
\end{align}
because $\psup M^{ABC} \psup = M^{ABC}$ is satisfied.
The minimum eigenvalue on the support $\psup$ is given by minimizing the $\ket{\psi}$ with vectors only on $\psup$,
in which case the first term is strictly positive, especially larger than $1/d_0$.
Thus, there exists $\varepsilon>0$ such that the minimum eigenvalue on $\psup$ is greater than 0, and the positivity of $M^{ABC}$ is guaranteed.

\end{proof}

\begin{rem}\label{rem:epsilon}
The construction of the Choi operators presented in the proof of Theorem~\ref{thm:main2} does not provide the optimal success probability in general.
The success probability for success-or-draw depends on $\varepsilon$, and the construction presented only focuses on the existence of a non-zero $\varepsilon$.
While it is difficult to obtain a general lower bound on $\varepsilon$, here we provide some insight into its value.

The first observation is that the construction of $S$ in Eq.~\eqref{eq:ap_thm_proof_def_s} indicates that this supermap only utilizes a single copy of the input unitary operation although $d$ copies are given,
and thus this construction must provide $\varepsilon \leq 1$.
In general, $\varepsilon$ can be greater than $1$.
For example, the numerical results on unitary inversion presented in the main text show that $\varepsilon = 4/3 > 1$ is achievable.

The second observation is that the positivity condition of the Choi operators strongly limits the value of $\varepsilon$.
In order to guarantee the positivity of $\nio$ in Eq.~\eqref{eq:construction_nio}, 
we can require $\varepsilon$ to be small enough that $\nio$ is a diagonally dominant matrix.
However, a lower bound on $\varepsilon$ obtained via this method is usually much smaller than the optimal value.
In general, there are few methods for determining the positivity of an operator, especially if the dimension is large.

\end{rem}

\begin{rem}\label{rem:indefinite_causal}
In the second part for the proof of Theorem~\ref{thm:main2} (mostly equivalent to Lemma~\ref{lem:nio2n}),
the condition Eq.~\eqref{eq:causal_condition_n_1} (Eq.~\eqref{eq:lem_def_nio}) is assumed which corresponds to the causal condition that the corresponding Choi operator is a sequential supermap or quantum comb.
However, when the indefinite causal order~\cite{indefinite1,indefinite2,indefinite3,indefinite_purification} is allowed,
this causal condition can be relaxed and the construction of $N$ from $\nio$ can be replaced as follows instead of Lemma~\ref{lem:nio2n}.
The conditions for an indefinite causal order supermap are that the corresponding Choi operator is positive, and that when the input operations are CPTP maps, the output operation is also a CPTP map.
Here we consider a subset of such supermaps which Choi operators satisfy the following condition:
\begin{align}
\Tr_{\ocal_0} C = \sum_\sigma p_\sigma C_\sigma^{\ical_0 \ical \ocal} \label{eq:ap_rem_def_c}
\end{align}
where $p_\sigma$ are probabilities sum up to 1, 
and $C_\sigma^{\ical_0 \ical \ocal}$ denotes a sequential supermap where the order of input operations are permuted with respect to the permutation $\sigma$.
This is a strictly stronger condition than that of the indefinite causal order supermaps,
but many quantum supermaps satisfy this condition such as the quantum switch.

Let $\nio_{\sigma} := P^{\ical \ocal}_\sigma (\nio) P^{\ical \ocal}_\sigma$ be the probabilistic comb with the order of input operations permuted by $\sigma$.
We define $N$ as
\begin{align}
N &:= ( \frac{1}{N!} \sum_\sigma \nio_\sigma ) \otimes \frac{ I^{\ocal_0} }{d_0}
+ \frac{1}{d_0}  \sum_{ij \geq 1} \eta_{ij} h_i^{\ical_0} \psym ( \frac{1}{N!} \sum_\sigma \nio_\sigma ) \psym \otimes h_j^{\ocal_0} \label{eq:ap_rem_n1} \\
&= \frac{I^{\ical_0}}{d_0} \otimes \frac{1}{N!} \sum_\sigma (\Tr_{\ical_0} \psym \nio_\sigma \psym) \otimes \frac{ I^{\ocal_0} }{d_0}
+ \frac{1}{N!} \sum_\sigma \pnsym  ( \nio_\sigma ) \pnsym \otimes \frac{ I^{\ocal_0}}{d_0} \notag \\
&\quad+ \sum_{ij \geq 1} \eta_{ij} \frac{h_i^{\ical_0}}{d_0} \otimes \frac{1}{N!}\sum_\sigma(\Tr_{\ical_0} \psym \nio_\sigma \psym) \otimes \frac{ h_j^{\ocal_0} }{d_0} \\
&= \frac{1}{d_0} J_{id}^{\ical_0 \ocal_0} \otimes \frac{1}{N!} \sum_\sigma (\Tr_{\ical_0} \psym \nio_\sigma \psym)
+  \frac{1}{N!} \sum_\sigma \pnsym ( \nio_\sigma ) \pnsym \otimes \frac{ I^{\ocal_0} }{d_0}
\end{align}
where the coefficients $\eta_{ij}$ are determined by $J_{id} = \frac{1}{d_0} I \otimes I + \frac{1}{d_0}\sum_{ij \geq 1} \eta_{ij} h_i \otimes h_j$.
In the first equality, we also use the fact that if an operator is permutation invariant, it is block diagonal in $\psym$ and $\pnsym$,
that is, the off-diagonal terms vanish as
\begin{align}
\psym (\frac{1}{N!} \sum_\sigma \nio_\sigma) \pnsym 
&= \psym (\frac{1}{N!} \sum_\sigma P^{\ical \ocal}_\sigma (\nio) P^{\ical \ocal}_\sigma) (I - \psym)\\
&= \psym \frac{1}{N!} \sum_\sigma (\nio) (P^{\ical \ocal}_\sigma - \psym)\\
&= \psym (\nio) (\psym - \psym) = 0.
\end{align}
By this construction, the positivity of $N$ is preserved because both terms in Eq.~\eqref{eq:ap_rem_n1} are positive, 
and the neutralization condition $\psym N \psym = J_{id} / d_0 \otimes \Tr_{\ical_0 \ocal_0} \psym N \psym$ is also satisfied.
To see the causal condition can be satisfied, we first note that 
\begin{align}
\Tr_{\ocal_0} N = \frac{1}{N!} \sum_\sigma \nio_\sigma
\end{align}
holds. 
Since there exists an operator $S$ such that $\Tr_{\ocal_0} (S + N)$ satisfies the sequential condition (which is actually given by Eq.~\eqref{eq:ap_thm_proof_def_s}),
by defining $S_\sigma^{\ical_0 \ical \ocal} := P^{\ical \ocal}_\sigma (\Tr_{\ocal_0} S) P^{\ical \ocal}_\sigma$,
$C_\sigma^{\ical_0 \ical \ocal} := S_\sigma^{\ical_0 \ical \ocal} + \nio_\sigma$ and $p_\sigma = 1/N!$,
the causal condition Eq.~\eqref{eq:ap_rem_def_c} is satisfied.

\end{rem}

\section{Success-or-draw is not possible with a single call for unitary inversion}\label{ap:proof2}

For the two-dimensional unitary inversion, 
we show that it is not possible to have a success-or-draw protocol if we have only a single use of the input unitary operation.
Especially, we show the only solution to the following SDP is $p=0$. Note that we denote $d=2$ in order to clarify that it corresponds to the dimension.
\begin{gather}
\max\quad p \\
\mathrm{s.t.}\quad \Tr_{\ical_1 \ocal_1} [ S {J_{U}}^T ] = p {J_{U^{-1}}} \label{eq:ap_success_inverse}\\
\Tr_{\ical_1 \ocal_1} [ N {J_{U}}^T ] \leq d {J_{id}} \label{eq:ap_neutralization_inverse}\\
S \geq 0, N \geq 0\\
\Tr_{\ocal_0} (S+N) = \Tr_{\ocal_1 \ocal_0} (S+N) \otimes \frac{ I^{\ocal_1} }{ d } \\
\Tr_{\ical_1 \ocal_1 \ocal_0} (S+N) = \Tr (S+N)  \frac{ I ^{\ical_0} }{ d }
\end{gather}

\begin{proof}

Assuming that $\{ p, S, N \}$ is a solution to this SDP,
then for any $U$, $\{ p, (U^{\ical_1} \otimes U^{\ocal_0}) S (U^{\ical_1} \otimes U^{\ocal_0}), U^{\ical_1} N U^{\ical_1} \}$ is also a solution to this SDP, because it satisfies all of the conditions.
By defining $S' = \int dU (U^{\ical_1} \otimes U^{\ocal_0}) S (U^{\ical_1} \otimes U^{\ocal_0})$ and $ N' = \int dU U^{\ical_1} N U^{\ical_1} $,
we obtain $\{ p, S', N' \}$ which is also a solution to this SDP.
Thus, without loss of generality, we can assume the following commutation relation 
\begin{gather}
[ S, U^{\ical_1} \otimes U^{\ocal_0} ] = 0 \label{eq:ap_comm_1}\\
[ N, U^{\ical_1} ] = 0. \label{eq:ap_comm_2}
\end{gather}

From the second commutation relation Eq.\eqref{eq:ap_comm_2} and Schur's lemma, $N$ can be decomposed as 
\begin{align}
N = N^{\ical_0 \ocal_1 \ocal_0} \otimes \frac{ I^{\ical_1} }{d}.
\end{align}
Consider Eq.~\eqref{eq:ap_neutralization_inverse} with $U = I$, we obtain
\begin{align}
d J_{id} &\geq \Tr_{\ical_1 \ocal_1} [ (N^{\ical_0 \ocal_1 \ocal_0} \otimes \frac{ I^{\ical_1} }{d}) {J_{id}}^T ] \\
& = \Tr_{\ocal_1} [ N^{\ical_0 \ocal_1 \ocal_0} ], \label{eq:ap_jid_tr_n013}
\end{align}
and $N^{\ical_0 \ocal_1 \ocal_0} $ can be decomposed as
\begin{align}
N^{\ical_0 \ocal_1 \ocal_0} = N^{\ocal_1} \otimes J_{id}^{\ical_0 \ocal_0} / d \label{eq:ap_n_product_state}
\end{align}
as follows.
Let $N^{\ical_0 \ocal_1 \ocal_0} = \sum_i p_i \proj{n_i^{\ical_0 \ocal_1 \ocal_0}}$.
Since $J_{id}$ is rank-1, Eq.~\eqref{eq:ap_jid_tr_n013} indicates that
$\Tr_{\ocal_1} \proj{n_i^{\ical_0 \ocal_1 \ocal_0}} \propto J_{id}$ holds for all $i$.
Consider the Schmidt decomposition $\ket{n_i^{\ical_0 \ocal_1 \ocal_0}} = \sum_j \alpha_{ij} \ket{a_j^{\ical_0 \ocal_0}} \otimes \ket{b_j^{\ocal_1}} $,
then $\Tr_{\ocal_1} \proj{n_i^{\ical_0 \ocal_1 \ocal_0}} = \sum_j |\alpha_{ij}|^2 \proj{a_j^{\ical_0 \ocal_0}}$ is proportional to the rank-1 operator$J_{id}$,
which means the only possible solution is that $\ket{n_i^{\ical_0 \ocal_1 \ocal_0}} = \ket{(\phi^+)^{\ical_0 \ocal_0}} \otimes \ket{b_j^{\ocal_1}}$ where $\proj{\phi^+} = J_{id} / d$ is the maximally entangled state.
Thus, $N^{\ical_0 \ocal_1 \ocal_0}$ can be decomposed as Eq.~\eqref{eq:ap_n_product_state}.

On the other hand, we can show
\begin{align}
S = p J_Y^{\ical_0 \ocal_1} \otimes J_Y^{\ical_1 \ocal_0 }\label{eq:ap_s_eq_jyjy}
\end{align}
as follows.
Note that $J_Y = d \psi^- = d \proj{\psi^-}$ where 
$\ket{\psi^-} = ( 1/\sqrt{2} ) ( \ket{01} - \ket{10} )$ is a maximally entangled state also known as the singlet state.
From Eq.\eqref{eq:ap_comm_1} and Schur's lemma, $S$ can be decomposed as
$S = S^{\ical_0 \ocal_1} \otimes { J_{Y}^{\ical_1 \ocal_0} } / {d}$.
Let $S^{\ical_0 \ocal_1} = \sum_i p_i \proj{s_i^{\ical_0 \ocal_1}}$ and consider Eq.~\eqref{eq:ap_success_inverse}.
Since the r.h.s. of Eq.~\eqref{eq:ap_success_inverse} is rank-1, it is necessary for every $i$ that
\begin{align}
\Tr_{\ical_1 \ocal_1} [ (\proj{s_i^{\ical_0 \ocal_1}} \otimes \frac{ J_{Y}^{\ical_1 \ocal_0} }{d} )  {J_{id}} ] \propto J_{id} \label{eq:ap_pure_success}
\end{align}
holds, where we choose $U = I$ in Eq.~\eqref{eq:ap_success_inverse}.
Consider the Schmidt decomposition $\ket{s_i^{\ical_0 \ocal_1}} = \sum_j \alpha_{ij} \ket{a_j}^{\ical_0} \otimes Y \ket{b_j}^{\ocal_1}$, 
where $\{ \ket{a_j} \}$ and $\{ \ket{b_j} \}$ are some basis and the Pauli operator $Y$ is added for convenience.
Then Eq.~\eqref{eq:ap_pure_success} becomes
\begin{align}
\sum_j \alpha_{ij} \ket{a_j}^{\ical_0} \otimes \ket{b_j}^{\ocal_0} \propto \ket{\phi^+}^{\ical_0 \ocal_0}.
\end{align}
and thus $\proj{s_i^{\ical_0 \ocal_1}}$ is proportional to $J_Y$.
The constant factor is obtained by direct calculation, and Eq.~\eqref{eq:ap_s_eq_jyjy} is proved.

By using the causal conditions, we obtain
\begin{align}
\Tr_{\ical_1 \ocal_0} (S+N) = \Tr_{ \ical_1  \ocal_1 \ocal_0} (S+N) \otimes \frac{ I^{\ocal_1} }{ d } 
= \Tr (S+N) \frac{ I^{\ical_0} \otimes I^{\ocal_1} }{ d^2 } = I^{\ical_0} \otimes I^{\ocal_1}
\end{align}
and since $S$ is given by Eq.~\eqref{eq:ap_s_eq_jyjy}, we obtain
\begin{align}
N^{\ical_0 \ocal_1} = I^{\ical_0 \ocal_1} - p d J_Y^{\ical_0 \ocal_1}. \label{eq:ap_n_eq_id_minus_jy}
\end{align}
On the other hand, Eq.~\eqref{eq:ap_n_product_state} indicates $N^{\ical_0 \ocal_1 } = N^{\ocal_1} \otimes I^{\ical_0} / d$,
and the only possible solution with Eq.~\eqref{eq:ap_n_eq_id_minus_jy} is $p=0$.

\end{proof}

\end{document}